\pgfplotsset{compat=newest}
\crefname{equation}{}{} 
\def\BibTeX{{\rm B\kern-.05em{\sc i\kern-.025em b}\kern-.08em
    T\kern-.1667em\lower.7ex\hbox{E}\kern-.125emX}}
\newtheorem{theorem}{Theorem}
\newtheorem{remark}{Remark}
\newtheorem{lemma}{Lemma}
\newtheorem{proposition}{Proposition}
\newtheorem{definition}{Definition}
\newcommand{\onesVector}{\mathbf{1}}
\newcommand{\R}{\mathbb{R}}
\newacronym{CBF}{CBF}{control barrier function}
\newacronym{MPC}{MPC}{model predictive control}
\newacronym{PMSM}{PMSM}{permanent magnet synchronous motor}
\newacronym{QP}{QP}{quadratic program}
\begin{document}
\title{Tunable Real-Time Safety Filters via Set-Based Control Barrier Functions}
\author{Kim P. Wabersich, Felix Berkel, Felix Gruber, Sven Reimann
\thanks{The authors are with Corporate Research Robert Bosch GmbH, 71272 Renningen, Germany. (E-mail: {firstname.lastname}@bosch.com).}
}
\maketitle

\begin{abstract}
	Safety filters for industrial constrained systems are required to combine certified constraint satisfaction, predictable online computation, and a transparent tuning interface. Existing set-based filters are based on a well-established control invariant set design that scales favorably with state and input constraints, but typically intervene only at the set boundary. Control barrier function (CBF)-based filters, by contrast, provide tunable intervention but require a scalar barrier construction. This paper proposes a set-based CBF safety filter that turns a convex control invariant set directly into a tunable barrier via its Minkowski functional. The resulting filter is formulated as a single-level quadratic program (QP) in which one class-$\mathcal{K}^e$ parameter sets the intervention aggressiveness. Explicit convex formulations are derived for polytopic, zonotopic, and MPC-based invariant sets. Under standard bounded-disturbance assumptions, the resulting safety filter guarantees constraint satisfaction and asymptotic recovery into the invariant set. For tight real-time budgets, a learning-based approximation enables online acceleration, while the formal safety guarantees remain tied to the exact formulation. The method is validated in numerical studies and on a permanent-magnet synchronous motor drive, where an explicit QP implementation evaluates within a \SI{150}{\micro\second} sampling window and has a worst-case execution time of \SI{28.04}{\micro\second}.
\end{abstract}

\begin{IEEEkeywords}
	Control barrier functions, Predictive control for linear systems, Intelligent systems, Constrained control
\end{IEEEkeywords}

\section{Introduction}
\label{sec:introduction}

Modern control systems are subject to tight state and input constraints, from collision avoidance in robotics to current limits in power electronics. Redesigning a well-tuned nominal controller whenever constraints change is impractical, which makes \emph{safety filters}~\cite{wabersich2023} a useful modular layer for embedded deployment. A safety filter monitors the nominal input and overrides it only when a constraint violation would otherwise occur.

The practical deployment of such filters is governed by three requirements.
First, the filter must be \emph{computationally tractable}, i.e., real-time feasible with bounded execution time within the sampling period.
Second, it must \emph{guarantee constraint satisfaction} and provide certified recovery after perturbations.
Third, it should offer \emph{tunable intervention}, with a transparent mechanism to trade off early corrections against late, aggressive overrides near the safe-set boundary.
Satisfying all three simultaneously is difficult. High-frequency applications such as electric drives with a \SI{150}{\micro\second} sampling period leave virtually no computational headroom, yet still demand certified constraint enforcement and smooth transients.

Existing approaches to safety filtering generally fall into three families, each based on set invariance~\cite{blanchini2015set}. \textit{Set-based safety filters}~\cite{fisac2018general,gruber2023scalable} verify state containment in a pre-computed invariant set and scale well to high-dimensional linear systems. They typically intervene abruptly at the set boundary and lack a recovery mechanism. \textit{\Glspl{CBF}-based safety filters}~\cite{ames2016control,alan2023control} offer smooth, tunable intervention by enforcing a lower bound on the one-step change of a scalar barrier function. The systematic synthesis of a scalar barrier that remains valid over a complex intersection of constraints, however, scales poorly or introduces conservatism. \textit{Predictive safety filters}~\cite{wabersich2021probabilistic} solve a receding-horizon problem at every step, offering flexibility at the cost of substantial online computation. \cref{tab:filter_comparison} summarizes these paradigms.

\par

In this work, we close the gap between scalable invariant-set computations and the online tuning interface of \glspl{CBF}. Our starting point is the observation that the Minkowski functional of any convex control invariant set is itself a valid \gls{CBF}. Using this fact, we formulate the safety filter as a convex optimization problem in which a single class-$\mathcal{K}^e$ parameter governs the aggressiveness of the intervention. The offline design inherits the scalability of established set-computation tools (e.g., zonotopic methods~\cite{gruber2023scalable,schafer2024}, reachability analysis~\cite{gruber2023scalable}, or predictive terminal sets~\cite{rawlings2017model,wabersich2018linear}). Online, the filter reduces to a convex \gls{QP} that endows any such set with tunable intervention and asymptotic recovery (\cref{tab:filter_comparison}).

\par

\begin{table}[t]
\centering
\footnotesize
\setlength{\tabcolsep}{2.5pt}
\renewcommand{\arraystretch}{1.15}
\caption{Feature comparison of safety-filter families for linear systems with convex constraints (per~\cite{wabersich2023}).}
\label{tab:filter_comparison}
\begin{tabular}{lccccc}
\hline
\textbf{Family} & \textbf{Design} & \textbf{Scalable} & \textbf{Tunable} & \textbf{Recovery} & \textbf{Online} \\
\hline
Set-based\textsuperscript{a} & explicit inv.\ set & $\checkmark$ & $\times$ & $\times$ & set check \\
CBF\textsuperscript{b} & barrier & $\times$\textsuperscript{\dag} & $\checkmark$ & $\checkmark$ & QP \\
Predictive\textsuperscript{c} & model + term.\ set & $\checkmark$ & indirect & $\times$ & MPC problem\\
\textbf{Set-based CBF} & inv.\ set\textsuperscript{\ddag} & $\checkmark$ & $\checkmark$ & $\checkmark$ & QP \\
\hline
\end{tabular}
\par\vspace{2pt}\raggedright
{\scriptsize \textsuperscript{a}\cite{fisac2018general,gruber2023scalable}\enspace\textsuperscript{b}\cite{ames2016control,alan2023control}\enspace\textsuperscript{c}\cite{rawlings2017model,wabersich2021probabilistic,wabersich2023}\\
\textsuperscript{\dag}Existing synthesis (SOS, HJ) scales poorly with many coupled constraints.\\
{\textsuperscript{\ddag}Explicit or implicit (e.g., \gls{MPC} feasible set without enumeration).}}
\end{table}

\begin{figure}[t]
\centering
\includegraphics[width=0.75\linewidth]{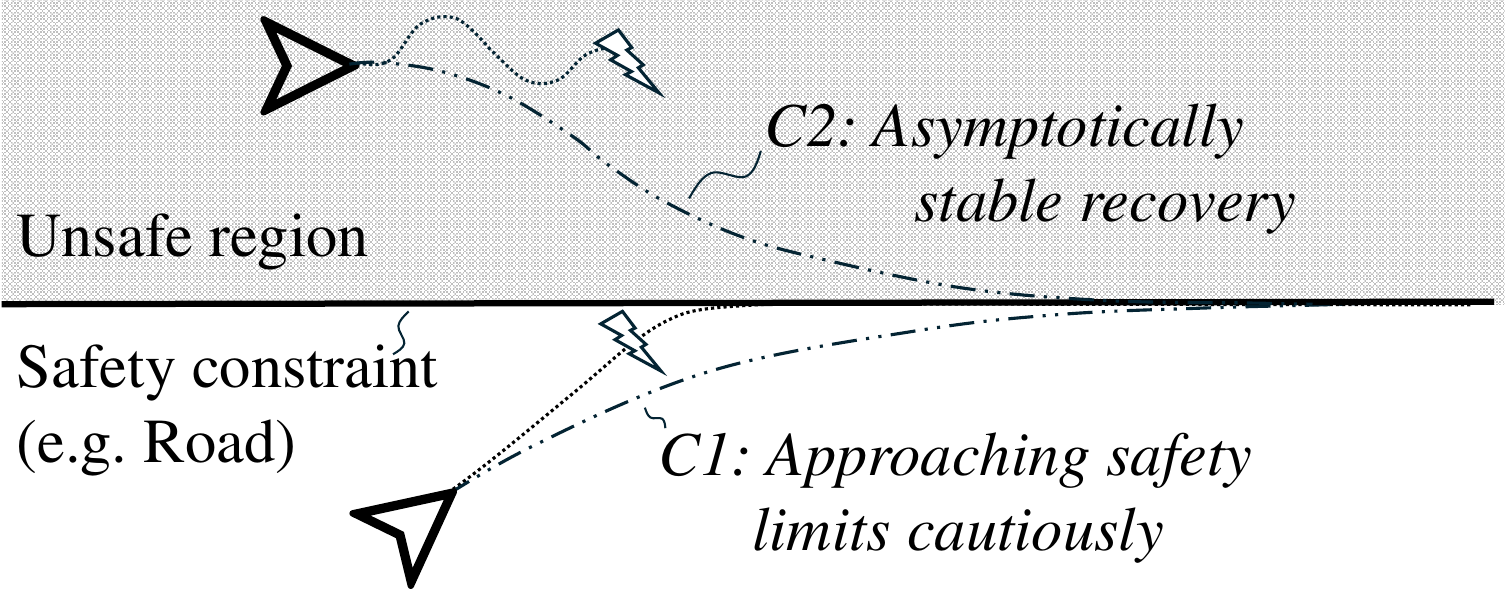}
\caption{Illustration of safety filter behaviors near the boundary of safe states (shaded: unsafe region). Dotted lines show set-based safety filters with aggressive interventions and no recovery guarantees.
\textbf{Contribution 1:} Our set-based \gls{CBF} safety filter (solid line) enables tunable, dampened approach to the boundary.
\textbf{Contribution 2:} If the boundary is crossed, our method (dashed-dotted line) guarantees asymptotic recovery, a guarantee not provided by set-based safety filters.}
\label{fig:safety_filter_comparison}
\end{figure}

\paragraph*{Contributions}
We propose a safety-filter design framework for linear systems with convex constraints, with the following contributions (\cref{fig:safety_filter_comparison}).

\par
\textit{C1) \Gls{CBF} construction from invariant sets}: We show that for any convex control invariant set (e.g., polytope, zonotope, or \gls{MPC} feasible set), its Minkowski functional is a valid \gls{CBF}. This construction preserves the exact constraint geometry without conservatism, and a single class-$\mathcal{K}^e$ parameter sets the boundary-approach dynamics (\cref{prop:nominal_set_cbf}).

\par
\textit{C2) Guaranteed recovery}: If the invariant set satisfies a robust invariance condition, the proposed filter drives out-of-bound states asymptotically back into the safe region, endowing the safe set with a domain of attraction (\cref{thm:set_based_cbf}).

\par
\textit{C3) Real-time convex implementation}: We derive exact convex reformulations of the safety-filter \gls{QP} for standard set representations (\cref{sec:design_and_implementation}). For polytopic sets, the parametric \gls{QP} admits an explicit offline solution. For more complex representations, we provide a learning-based approximation that reduces the online footprint to a standard \gls{CBF} safety-filter evaluation with bounded safety errors (\cref{subsec:approximation_set_based_cbf}).

\par
The approach is validated in simulation on a 10-state pendulum chain and a motion-control task (\cref{sec:numerical_simulation}), and experimentally on an electric drive operating at a \SI{150}{\micro\second} sampling rate (\cref{sec:electrical_machine}).

\subsection{Related Work}
We survey the relevant literature on synthesizing minimally invasive, real-time safety filters across four main categories as follows.

\paragraph*{Invariant-set computation}
Reachability analysis~\cite{gruber2023scalable}, zonotopic methods~\cite{schafer2024}, and system-level synthesis~\cite{leeman2023predictive} compute control invariant sets for constrained linear systems. These methods yield certified safe sets offline, but lack a tuning mechanism for online boundary interventions. They provide the geometric objects for our approach; we augment them with a scalar barrier derived from the set geometry to enable tunable intervention online.

\paragraph*{\Gls{CBF}-based safety filters}
\Gls{CBF} methods~\cite{ames2016control,alan2023control} impose a lower bound on the one-step change of a barrier function along closed-loop trajectories. While broadly applicable to nonlinear systems, systematic synthesis tools such as SOS programming or HJ reachability scale poorly with coupled constraints or introduce conservatism through smooth safe-set approximations. Our approach avoids this trade-off: by constructing the barrier directly from an available control invariant set, we inherit its exact constraint geometry while retaining the tuning interface of \glspl{CBF}.

\paragraph*{Predictive safety filters and robust \gls{MPC}}
Predictive safety filters~\cite{wabersich2021probabilistic,wabersich2023} and robust \gls{MPC}~\cite{rawlings2017model} solve finite-horizon optimizations to recursively enforce feasibility. They require a prediction model, tightened constraints, and a terminal set. Tuning the intervention aggressiveness is indirect and is handled via cost weights and horizon length. Our formulation reuses the same feasible-set object but introduces a barrier-step constraint, yielding a single, explicit tuning parameter.

\paragraph*{Hamilton-Jacobi barrier-value functions}
The work in~\cite{choi2021robust} constructs barrier-value functions via Hamilton-Jacobi reachability, providing strong safety guarantees for nonlinear systems. Solving the underlying HJ partial differential equation suffers from the curse of dimensionality, typically limiting tractability to systems below six states. Our restriction to linear systems with convex constraints avoids this bottleneck and enables the use of efficient, implicit set representations.

\paragraph*{Outline of this article}
After formalizing the system class, constraints, and safety-filter problem (\cref{sec:problem_setting}), we recall discrete-time \glspl{CBF} and their use in safety filters (\cref{sec:cbf}). Building on this, \cref{sec:set_based_cbf} embeds the \gls{CBF} mechanism into set-based filters, and \cref{sec:design_and_implementation} details how to construct these filters from efficiently computed control invariant sets. To accommodate tight real-time budgets, \cref{subsec:approximation_set_based_cbf} introduces a learned approximation with input-to-state stability guarantees. The framework is validated via simulations (\cref{sec:numerical_simulation}) and hardware experiments (\cref{sec:electrical_machine}).

\paragraph*{Notation}
For a set $\mathcal A\subseteq\R^n$ and scalar $c\in\R$, the scaled set is denoted by $c\mathcal A\triangleq \{ca | a\in\mathcal A\}$. A continuous function $\alpha: [0, a) \rightarrow \R$ for some $a>0$ is class $\mathcal K$ $(\alpha \in \mathcal K)$ if it is strictly increasing and $\alpha(0) = 0$, and is extended class $\mathcal K$ ($\alpha\in\mathcal K^e$) if it is a class $\mathcal K$ function defined on $(-a,b)$ with $a,b>0$. A ball of radius $r>0$ defined by the norm $\Vert . \Vert$ is denoted by $\mathcal B_n(r)\triangleq \{ x\in\mathbb R^n~|~\Vert x \Vert \leq r\}$. The Minkowski sum of two sets $\mathcal A, \mathcal B\subseteq\R^n$ is defined as $\mathcal A\oplus \mathcal B\triangleq \{a+b~|~a\in\mathcal A, b\in\mathcal B\}$.
The Pontryagin difference of two sets $\mathcal A, \mathcal B\subseteq\R^n$ is defined as $\mathcal A\ominus \mathcal B\triangleq \{x\in\R^n~|~\{x\}\oplus\mathcal B\subseteq\mathcal A\}$. The interior of a set $\mathcal A \subseteq \R^n$ is denoted by $\mathrm{int}(\mathcal A)$.
We adopt the following shorthand: $\mathcal{D}$ denotes the \gls{CBF} domain, $\mathcal{S}\triangleq\{x\in\mathcal D:h(x)\geq 0\}$ is the safe set, $\Omega$ is a control invariant set, and $\tilde{\Omega}$ is a robust control invariant set. Since $\mathcal{S}$ always equals $\Omega$ and $\mathcal{D}$ is either $\Omega$ or $\tilde{\Omega}$, both $\mathcal{S}$ and $\mathcal{D}$ are dropped in favour of $\Omega$ and $\tilde{\Omega}$ after \cref{sec:cbf}.

\section{Preliminaries}\label{sec:preliminaries}

\subsection{Problem Setting}\label{sec:problem_setting}

We consider discrete-time, linear time-invariant (LTI) systems of the form
\begin{align}\label{eq:linear_system}
x(k+1) = Ax(k) + Bu(k),
\end{align}
where $x(k)\in\mathbb R^{n_x}$ is the system state and $u(k)\in \mathbb R^{n_u}$ is the control input at time step $k\in\mathbb N$.
The system is subject to hard state and input constraints
\begin{align}\label{eq:constraints}
x(k) \in \mathcal X, \qquad
u(k) \in \mathcal U,
\end{align}
where $\mathcal X \subset \mathbb R^{n_x}$ and $\mathcal U \subset \mathbb R^{n_u}$ are compact, convex sets containing the origin in their respective interiors.

\par

A safety filter acts as a minimally invasive supervisory map $\kappa_{\mathrm f}:\R^{n_x}\times\R^{n_u}\rightarrow\mathcal U$. It monitors the current state $x(k)$ and a proposed nominal input $u_{\mathrm{des}}(k)$, passes $u_{\mathrm{des}}(k)$ to the plant unmodified whenever constraint satisfaction can be certified, and overrides it with the minimal necessary deviation otherwise.
The synthesis of a practical safety filter requires balancing theoretical guarantees with deployment constraints. This motivates the following formal design problem.

\par

Given the LTI system~\eqref{eq:linear_system}, convex constraints~\eqref{eq:constraints}, a nominal input sequence $u_{\mathrm{des}}(k)$, and a known control invariant set $\Omega\subseteq\mathcal X$ with $0\in\mathrm{int}(\Omega)$, design a safety filter $u(k) = \kappa_{\mathrm f}(x(k),u_{\mathrm{des}}(k))$ that satisfies the following properties:
\begin{enumerate}
\item[(P1)] \textit{Constraint satisfaction:} For any initial condition $x(0)\in\Omega$, the closed-loop system guarantees $x(k)\in\mathcal X$ and $u(k)\in\mathcal U$ for all $k\in\mathbb N$.
\item[(P2)] \textit{Tunable intervention:} The filter provides an explicit parameter to shape the boundary-approach dynamics, dictating the trade-off between nominal performance and early safety interventions.
\item[(P3)] \textit{Recovery from unsafe states:} If equipped with a robust control invariant set $\tilde\Omega\supset\Omega$ (\cref{sec:robust_invariance_and_stability}), the filter asymptotically steers trajectories initialized in $\tilde\Omega\setminus\Omega$ back into $\Omega$.
\item[(P4)] \textit{Real-time implementation:} The filter reduces to a convex optimization problem compatible with standard set representations, supporting strict real-time requirements.
\end{enumerate}

Contributions C1-C3 address these requirements sequentially: C1 establishes (P1) and (P2) via \cref{prop:nominal_set_cbf}; C2 achieves the recovery requirement (P3) via \cref{thm:set_based_cbf}; and C3 provides the convex and learning-based architectures requisite for (P4) in \cref{sec:design_and_implementation,subsec:approximation_set_based_cbf}. To lay the groundwork, we first formalize discrete-time \glspl{CBF} and the foundational notions of set invariance.

\subsection{Safety Filters based on Control Barrier Functions}\label{sec:cbf}

\Gls{CBF}-based safety filters synthesize a safe control action by solving a minimally invasive projection problem~\cite{gurriet2018towards,choi2021robust,didier2023approximate}:
\begin{subequations}
\label{eq:cbf_filter}
\begin{align}
\min_{u\in\mathcal U} &~G(u, u_{\mathrm{des}}) \label{eq:cbf_filter_objective} \\
\mathrm{s.t.} \; & ~h(Ax + Bu) - h(x) \geq \Delta h(x),\label{eq:cbf_filter_step_condition}
\end{align}
\end{subequations}
yielding the filtering law
\begin{align}\label{eq:cbf_safety_filter_law}
\kappa_{\mathrm{f}}(x,u_{\mathrm{des}})\triangleq u^*(x,u_{\mathrm{des}}),
\end{align}
where $u^*(x,u_{\mathrm{des}})$ is an optimizer of~\eqref{eq:cbf_filter}.

The objective $G(u, u_{\mathrm{des}})$ typically minimizes a specified norm, e.g., $G(u, u_{\mathrm{des}}) = \Vert u - u_{\mathrm{des}} \Vert$, although more complex metrics that anticipate future desired inputs are also possible~\cite{wabersich2023}. Safety is enforced by the discrete-time \gls{CBF} constraint~\eqref{eq:cbf_filter_step_condition}. Following~\cite[Definition III.1]{wabersich22} and~\cite{agrawal2017discrete}, the scalar function $h(x)$ encodes a continuous safety margin: positive values certify safety, while negative values indicate constraint violations. The step bound $\Delta h(x)$ shapes the dynamic evolution near the boundary.

\begin{definition}[Control Barrier Function]\label{def:barrier_function}
A continuous function $h:\mathcal D\rightarrow\R$ defined on a compact domain $\mathcal D\subset\R^{n_x}$ is a \gls{CBF} for~\eqref{eq:linear_system} under constraints~\eqref{eq:constraints} if its corresponding safe set $\mathcal S\triangleq \{x\in\mathcal D:h(x)\geq 0\}$ satisfies $\mathcal S\subseteq\mathcal X$, both $\mathcal S$ and $\mathcal D$ are non-empty and compact, and there exists a function $\alpha\in\mathcal K^e$ such that for all $x\in\mathcal D$ there exists an input $u\in\mathcal U$ with $Ax+Bu\in\mathcal D$ satisfying
\begin{align}\label{eq:cbf_condition}
h(Ax + Bu)-h(x) \geq \Delta h(x),
\end{align}
with the step bound defined as
\begin{align}\label{eq:cbf_step_bound}
\Delta h(x) \triangleq
\begin{cases}
-\min(\alpha(h(x)),h(x)), & \text{if } h(x)\geq 0, \\
-\max(\alpha(h(x)),h(x)), & \text{else.}
\end{cases}
\end{align}
The set of safe control inputs at $x\in\mathcal D$ is subsequently defined as
\begin{align}\label{eq:cbf_safe_control}
K_{\mathrm{CBF}}(x)\triangleq \{u\in\mathcal U\;|\;\cref{eq:cbf_condition}\}.
\end{align}
\end{definition}

\par

The two-case $\min/\max$ formulation of $\Delta h(x)$ in \cref{def:barrier_function} serves two roles. For states interior to the safe set ($x\in\mathcal S$), the $\min$ operator ensures $h(Ax+Bu) \geq 0$, establishing forward invariance while allowing bounded margin decay. For out-of-bound states ($x\notin\mathcal S$), the $\max$ operator ensures that the required one-step increase in $h$ is governed by $\alpha$ rather than $h(x)$ itself, so the filter demands monotone progress toward $\mathcal S$ without requiring an immediate jump to its boundary (\cref{fig:safety_filter_comparison}).

\par

Compared to~\cite[Definition III.1]{wabersich22}, we enforce the descent condition~\eqref{eq:cbf_condition} globally over $\mathcal D$ (rather than only on $\mathcal D\setminus\mathcal S$) using the explicit structure of~\eqref{eq:cbf_step_bound}. This generalizes discrete-time exponential barriers~\cite{agrawal2017discrete}, in which $\alpha(h(x))=\gamma h(x)$, while inheriting the class-$\mathcal K^e$ tuning flexibility of continuous-time \gls{CBF} design~\cite{ames2016control,alan2023control}. The function $\alpha$ acts as a direct tuning parameter: a linear choice $\alpha(r) = s\cdot r$ bounds the margin decay geometrically, where $s \in (0, 1]$ shifts the filter behavior from a permissive invariance check ($s=1$) to a damped, early-intervention supervisor ($s \to 0$).

\begin{definition}[Set Invariance]\label{def:control_invariant_set}
(i) A set $\Omega\subseteq\mathbb R^{n_x}$ is \textit{control invariant} for system~\eqref{eq:linear_system} under~\eqref{eq:constraints} if, for all $x\in\Omega$, there exists $u\in\mathcal U$ such that $Ax + Bu\in\Omega$. 
(ii) For an autonomous system operating under a fixed policy $\pi:\mathbb R^{n_x}\rightarrow\mathbb R^{n_u}$, the set is \textit{invariant} if $Ax + B\pi(x)\in\Omega$ for all $x\in\Omega$.
\end{definition}

\begin{proposition}\label{prop:cbf_properties}
Let $\mathcal D\subset\mathbb R^{n_x}$ be a non-empty, compact set, and let $h:\mathcal D\rightarrow\mathbb R$ be a \gls{CBF} with safe set $\mathcal S$ per \cref{def:barrier_function}. If $\mathcal S \subset \mathcal D$ and $\mathcal D$ is invariant for~\eqref{eq:linear_system} under a policy $u(k)=\kappa(x(k)) \in K_{\mathrm{CBF}}(x)$, then:
\begin{enumerate}
\item $\mathcal S$ is a forward invariant set,
\item $\mathcal S$ is asymptotically stable within $\mathcal D$.
\end{enumerate}
\end{proposition}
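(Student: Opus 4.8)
The plan is to prove the two claims separately: invariance of $\mathcal S$ follows from a direct one-step argument, while asymptotic stability is obtained by exhibiting an explicit Lyapunov-type function that measures the violation of the safe set.

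First I would establish that $\mathcal S$ is invariant. Fix any $x\in\mathcal S$, so that $h(x)\geq 0$ and the first case of \cref{eq:cbf_decrease_bound} gives $\Delta h(x)=\min(\alpha(h(x)),h(x))$. Since $\kappa(x)\in K_{\mathrm{CBF}}(x)$, the decrease condition \cref{eq:cbf_condition} yields $h(Ax+B\kappa(x))\geq h(x)-\Delta h(x)=h(x)-\min(\alpha(h(x)),h(x))\geq 0$, where the last step uses $\min(\alpha(h(x)),h(x))\leq h(x)$. Because $\mathcal D$ is invariant under $\kappa$ and $\mathcal S\subset\mathcal D$, the successor $Ax+B\kappa(x)$ lies in $\mathcal D$, so $h$ is defined there and $Ax+B\kappa(x)\in\mathcal S$, proving invariance.

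For asymptotic stability I would introduce the candidate $V(x)\triangleq\max(0,-h(x))$, which is continuous on $\mathcal D$ by continuity of $h$, vanishes exactly on $\mathcal S$, and is strictly positive on $\mathcal D\setminus\mathcal S$. The key step is to bound the one-step evolution of $V$ on $\mathcal D\setminus\mathcal S$. There $h(x)<0$, the second case of \cref{eq:cbf_decrease_bound} applies, and combining \cref{eq:cbf_condition} with $-h(Ax+B\kappa(x))\leq \Delta h(x)-h(x)=\max(\alpha(h(x))-h(x),0)$ gives, after substituting $h(x)=-V(x)$, the estimate $V(Ax+B\kappa(x))\leq\max(0,V(x)+\alpha(-V(x)))$. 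Defining $\beta(s)\triangleq-\alpha(-s)$, which is a class $\mathcal K$ function since $\alpha\in\mathcal K^e$ is strictly increasing with $\alpha(0)=0$, this reads $V(Ax+B\kappa(x))\leq\max(0,V(x)-\beta(V(x)))$, a strict decrease because $\beta(V(x))>0$ whenever $V(x)>0$. Notably this captures both possibilities permitted by the case distinction in \cref{eq:cbf_decrease_bound}: a trajectory either reaches $\mathcal S$ in one step or strictly reduces $V$ by $\beta(V)$.

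The hard part will be converting this pointwise decrease into set asymptotic stability. I would use compactness of $\mathcal D$ and continuity of $V$ to obtain class $\mathcal K$ bounds $\underline{\alpha}(\mathrm{dist}(x,\mathcal S))\leq V(x)\leq\overline{\alpha}(\mathrm{dist}(x,\mathcal S))$, where the upper bound is immediate from continuity and the lower bound follows since $V$ is continuous and vanishes only on $\mathcal S$ over the compact domain $\mathcal D$. Invariance of $\mathcal D$ under $\kappa$ keeps every trajectory started in $\mathcal D$ inside $\mathcal D$, so the decrease estimate holds at each step; a standard discrete-time comparison argument then produces a class $\mathcal{KL}$ bound on $V(x(k))$, and hence on $\mathrm{dist}(x(k),\mathcal S)$, delivering both Lyapunov stability and attractivity of $\mathcal S$ in $\mathcal D$. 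This reproduces the mechanism of \cite[Thm.~III.4]{wabersich22}, the only extra care being the class $\mathcal K$ construction of $\beta$ from $\alpha$ and the verification that the prescribed structure of $\Delta h$ in \cref{eq:cbf_decrease_bound} is compatible with a strictly decreasing Lyapunov estimate.
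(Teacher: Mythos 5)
Your argument is correct, but it takes a different route from the paper: the paper's proof is a one-line reduction, observing that the decrease requirement in \cref{eq:cbf_condition}--\cref{eq:cbf_decrease_bound} is strictly more restrictive than the original discrete-time CBF definition of \cite{wabersich22}, so both claims are inherited directly from \cite[Thm.~III.4]{wabersich22}. You instead give a self-contained proof: the one-step invariance argument via $h(x)-\min(\alpha(h(x)),h(x))\geq 0$ is exactly right, and your Lyapunov candidate $V(x)=\max(0,-h(x))$ together with the identity $h(x)-\max(\alpha(h(x)),h(x))=\min(h(x)-\alpha(h(x)),0)$ correctly yields the decrease $V(Ax+B\kappa(x))\leq\max(0,V(x)-\beta(V(x)))$ with $\beta(s)=-\alpha(-s)\in\mathcal K$; combined with the sandwich bounds on $V$ in terms of $\mathrm{dist}(x,\mathcal S)$ over the compact, invariant $\mathcal D$ and a standard discrete-time comparison lemma, this gives the $\mathcal{KL}$ estimate and hence asymptotic stability of $\mathcal S$. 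What your version buys is transparency and independence from the cited theorem, making explicit where each case of \cref{eq:cbf_decrease_bound} enters and how the specific structure of $\Delta h$ produces a strict decrease outside $\mathcal S$; what it costs is length and the need to invoke (or reprove) the comparison-lemma machinery that \cite{wabersich22} already packages. The only point worth making fully explicit in a polished write-up is the passage from the pointwise decrease to the $\mathcal{KL}$ bound, since $\mathrm{id}-\beta$ need not itself be of class $\mathcal K$ and one must either shrink $\beta$ accordingly or cite the standard lemma that handles this.
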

\begin{proof}
Because the condition~\eqref{eq:cbf_condition} imposes a strictly tighter bound than the original definition in~\cite[Definition III.1]{wabersich22}, the proof follows directly from~\cite[Theorem III.4]{wabersich22}.
\end{proof}

The properties summarized in \cref{prop:cbf_properties} capture the core utility of \gls{CBF}-based safety filters. Synthesizing such barrier functions analytically is generally difficult; the next sections show how they can be constructed systematically from established control invariant sets.

\section{Set-Based Control Barrier Functions}
\label{sec:set_based_cbf}

This section presents a mechanism to construct a \gls{CBF} directly from a control invariant set $\Omega$. The set may be represented explicitly (e.g., by half-spaces) or implicitly (e.g., as the feasible domain of a robust predictive controller). In the implicit case, the set need not be projected or enumerated; the barrier value is computed via scaled set membership, embedding the invariant set's geometry directly into the QP constraints.

This decoupling separates two design tasks: (i) obtaining a control invariant set with established, scalable methods (\cref{sec:design_and_implementation}), and (ii) converting that static set into a tunable \gls{CBF} safety filter. The underlying set-theoretic tools (Minkowski functionals, C-set scaling, contractive Lyapunov arguments) are classical~\cite{blanchini2015set}; our contribution is their combination into an online, class-$\mathcal K$-tunable \gls{CBF} safety filter, including the robust-superset recovery mechanism (\cref{sec:robust_invariance_and_stability}) and its convex implementation (\cref{sec:design_and_implementation}).

\par

Let $\Omega \subseteq \mathcal{X}$ be a compact, convex, and control invariant set for system~\eqref{eq:linear_system} under constraints~\eqref{eq:constraints}, containing the origin in its interior ($0\in\mathrm{int}(\Omega)$). For any state $x \in \R^{n_x}$, the Minkowski functional~\cite[Section 3.1.2]{blanchini2015set} is defined as
\begin{align}
\gamma_{\Omega}(x)\triangleq \inf\{\gamma \geq 0:~x\in\gamma\Omega\}. \label{eq:set_based_gamma}
\end{align}
We define the corresponding set-based \gls{CBF} as
\begin{align}
h(x) \triangleq 1 - \gamma_{\Omega}(x). \label{eq:set_based_cbf}
\end{align}
For this set-based construction, the generic safe set in \cref{def:barrier_function} exactly recovers the invariant set: $\{x:h(x)\geq 0\} = \Omega$. In the nominal case (\cref{prop:nominal_set_cbf}), the \gls{CBF} is defined on the domain $\mathcal D = \Omega$. When out-of-set recovery is required (\cref{thm:set_based_cbf}), the domain expands to a robust control-invariant superset, $\mathcal D = \tilde\Omega\supset\Omega$. Hence, the remainder of the construction uses $\Omega$ and $\tilde\Omega$ directly rather than the generic symbols $\mathcal S$ and $\mathcal D$.

\begin{proposition}\label{prop:nominal_set_cbf}
Let $\Omega$ be a convex control invariant set satisfying $0\in\mathrm{int}(\Omega) $ and $\Omega\subseteq \mathcal X$. The corresponding set-based \gls{CBF}~\eqref{eq:set_based_cbf} satisfies \cref{def:barrier_function} on domain~$\Omega$, establishing forward invariance of~$\Omega$.
\end{proposition}

The proof, provided in the appendix, shows that deploying the set-based \gls{CBF}~\eqref{eq:set_based_cbf} in the safety filter~\eqref{eq:cbf_filter} guarantees forward invariance for all $x(0)\in\Omega$.

The operational flexibility of this construction comes from the arbitrary choice of $\alpha\in\mathcal K^e$ used to shape the step bound $\Delta h$, which lets the designer specify the closed-loop approach to the safe-set boundary (\cref{fig:safety_filter_comparison}). We demonstrate this design freedom in \cref{subsec:motion_control,sec:electrical_machine}.

To make the intuition precise, consider a linear choice $\alpha(r) = s\cdot r$ with $s\in(0,1]$. The filter constraint~\eqref{eq:cbf_filter_step_condition} enforces
\begin{align}\label{eq:tuning_bound}
h(x(k+1)) \;\geq\; (1-s)\,h(x(k)),
\end{align}
which bounds the per-step geometric decay of the safety margin. At the extremes, $s=1$ relaxes the constraint to $h(x(k{+}1))\geq 0$ and recovers a standard set-invariance filter that only prevents boundary violations. Conversely, $s\to 0$ enforces $h(x(k{+}1))\approx h(x(k))$ and yields conservative, highly damped trajectories. The parameter $s$ thus provides a single design knob: values close to $1$ prioritize the nominal controller's tracking fidelity, whereas smaller values produce early, smooth intervention. The latter is useful for systems with limited actuator bandwidth, significant measurement noise, or strict safety-buffer requirements.

\begin{remark}[Lyapunov perspective]\label{rem:lyapunov}
The set-based \gls{CBF} $h(x) = 1 - \gamma_\Omega(x)$ acts both as a safety margin and as a Lyapunov-like function for the safe set $\Omega$. For $x\notin\Omega$ (i.e., $h(x)<0$), the bound~\eqref{eq:cbf_step_bound} requires $h$ to strictly increase per step, which is equivalent to a monotone contraction of $\gamma_\Omega(x)$ toward $1$. For $x\in\Omega$ (i.e., $h(x)\geq 0$), the bound transitions to a decay-rate constraint on the margin. We exploit this dual role in \cref{thm:set_based_cbf} to establish asymptotic stability.
\end{remark}

\subsection{Robust Invariance for Asymptotic Stability of the Safe Set}\label{sec:robust_invariance_and_stability}

We extend the nominal framework of \cref{prop:nominal_set_cbf} to guarantee asymptotic stability of the safe set $\Omega$, i.e., the second case of \cref{eq:cbf_step_bound}. This requires a \textit{robust} control invariant set $\tilde \Omega$ computed for an artificial additive disturbance set $\mathcal W$.

\begin{definition}[Robust Control Invariant Set]\label{def:robust_control_invariant_set}
(i) A set $\tilde\Omega\subseteq \R^{n_x}$ is \textit{robust control invariant} for system~\eqref{eq:linear_system} under~\eqref{eq:constraints} with respect to an additive disturbance set $\mathcal W\subseteq\R^{n_x}$ if, for all $x\in\tilde\Omega$, there exists $u\in\mathcal U$ such that $Ax + Bu + w\in\tilde\Omega$ for all $w\in\mathcal W$.
(ii) For autonomous systems under a fixed feedback controller $\pi:\mathbb R^{n_x}\rightarrow\mathbb R^{n_u}$, the set is \textit{robust invariant} if $Ax+B\pi(x)+w\in\tilde \Omega$ for all $x\in\tilde\Omega$ and $w\in\mathcal W$.
\end{definition}

The robustness margin encoded in $\tilde\Omega$ provides the control authority needed to asymptotically stabilize a strictly contracted invariant set $\Omega \triangleq \nu \tilde\Omega$ with $\nu\in(0,1)$. Choosing $\nu$ such that $\tilde\Omega \ominus \mathcal W\subseteq \Omega$ gives the filter the capacity to steer states from the neighborhood $\tilde\Omega\setminus\Omega$ back into $\Omega$. Deviations outside $\Omega$ are treated as virtual disturbances that the robust set is designed to reject.

\begin{theorem}\label{thm:set_based_cbf}
Let $\tilde\Omega\subset \R^{n_x}$ be a convex robust control invariant set with $0\in\mathrm{int}(\tilde \Omega)$ and $\tilde \Omega \subseteq \mathcal X$ for some disturbance $\mathcal W$ with $0\in\mathrm{int}(\mathcal W)$.
Let $\Omega \triangleq \nu \tilde\Omega$ for some $\nu \in (0,1)$ such that $\tilde\Omega \ominus \mathcal W\subseteq \Omega$ holds.
Then, the set-based \gls{CBF}~\eqref{eq:set_based_cbf} satisfies \cref{def:barrier_function} on domain~$\tilde\Omega$, establishing forward invariance of~$\Omega$ and asymptotic stability of~$\Omega$ in~$\tilde\Omega$.
\end{theorem}
\begin{proof}
The proof builds upon \cref{prop:nominal_set_cbf}. Control invariance of $\Omega=\nu\tilde\Omega$ follows from robust invariance of $\tilde\Omega$ at $w=0$ combined with convexity of $\mathcal U$ and $0\in\mathcal U$, so \cref{prop:nominal_set_cbf} applies and establishes forward invariance of $\Omega$.
It remains only to verify the second case of the bound \cref{eq:cbf_step_bound}.
When $h(x)<0$, \cref{eq:cbf_step_bound} yields $\Delta h(x) = -\max(\alpha(h(x)), h(x)) \leq -h(x)$. It thus suffices to establish $h(Ax+Bu) - h(x) \geq -h(x)$, which simplifies to $h(Ax + Bu ) \geq 0$. Since $\tilde\Omega$ is robustly control invariant, there exists an input $u\in\mathcal U$ such that $Ax+Bu+w\in\tilde\Omega$ for all $w\in\mathcal W$. By definition of the Pontryagin difference, $Ax+Bu\in\tilde\Omega\ominus \mathcal W \subseteq \Omega$ (by assumption). Because $\Omega$ is convex and $0\in\mathrm{int}(\Omega)$, membership $Ax+Bu\in\Omega$ implies $\gamma_{\Omega}(Ax+Bu)\leq 1$~\cite[Section~3.1.2]{blanchini2015set}, ensuring $h(Ax+Bu)=1-\gamma_\Omega(Ax+Bu) \geq 0$.
\end{proof}

The choice of $\mathcal W$ and $\nu$ parameterizes a trade-off between recovery capability and conservatism in nominal operation. A larger virtual disturbance $\mathcal W$ permits a smaller $\nu$, which expands the guaranteed domain of attraction $\tilde \Omega \setminus \nu\tilde\Omega$ (the recovery neighborhood). This shrinks the certified safe set $\Omega = \nu\tilde\Omega$ and forces the filter to act more restrictively during nominal operation. If out-of-bound recovery is not required, the nominal formulation of \cref{prop:nominal_set_cbf} is sufficient.

Both \cref{prop:nominal_set_cbf} and \cref{thm:set_based_cbf} require $0\in\mathrm{int}(\Omega)$, a standard assumption for controllers regulating to an interior equilibrium. This property is easy to verify computationally, e.g., by checking $\mathcal{B}_{n_x}(r)\subseteq\Omega$ for some $r>0$. Scalable set-computation algorithms, such as viability theory~\cite{liniger2017real}, zonotopic reachability~\cite{gruber2023scalable,schafer2024}, and robust predictive control~\cite{rawlings2017model}, produce sets that satisfy this condition and serve as plug-and-play inputs to the workflow in \cref{alg:workflow}.
\section{Implementation and Design of Set-based \texorpdfstring{\gls{CBF}}{CBF} Safety Filters}
\label{sec:design_and_implementation}

While the set-based \gls{CBF} introduced in \cref{sec:set_based_cbf} is representation-agnostic, deploying it as a real-time safety filter requires reformulating the Minkowski functional and set-scaling operations into tractable convex constraints. Scalable methods for computing control invariant sets provide these representations: classical invariance algorithms~\cite{blanchini2015set,herceg2013multi} yield H-polytopes, data-driven methods~\cite{rosolia2021robust,wabersich2021soft} yield V-polytopes, zonotopic reachability~\cite{gruber2023scalable,schafer2024} yields tight under-approximations for high-dimensional systems, and robust predictive control~\cite{rawlings2017model,wabersich2023} yields implicit feasible domains. This section derives explicit convex reformulations for each case.

Directly enforcing the \gls{CBF} step condition~\cref{eq:cbf_filter_step_condition} via~\cref{eq:set_based_cbf} requires evaluating $h(Ax+Bu)$, which is generally a bi-level optimization problem. To obtain a single-level program, we introduce an auxiliary decision variable $\gamma^+ \geq 0$ that bounds the Minkowski functional at the successor state:
\begin{subequations}
	\label{eq:set_based_cbf_filter}
	\begin{align}
		\min_{u\in\mathcal U, \gamma^+\geq 0} \quad & G(u, u_{\mathrm{des}}) \label{eq:set_based_cbf_filter_objective}                    \\
		\mathrm{s.t.} \quad                         & 1-\gamma^+ - h(x) \geq \Delta h(x),\label{eq:set_based_cbf_filter_step_condition} \\
		                                        & Ax + Bu \in \gamma^+ \Omega. \label{eq:set_based_cbf_filter_set_constraint}
	\end{align}
\end{subequations}
Constraint~\cref{eq:set_based_cbf_filter_set_constraint} guarantees that $1-\gamma^+$ is a valid lower bound for the successor \gls{CBF} value, i.e., $1-\gamma^+ \leq h(Ax+Bu)$. The reformulation is exact in the input variable: for any candidate $u$, the minimal feasible value of $\gamma^+$ is $\gamma_\Omega(Ax+Bu)$, which recovers the original \gls{CBF} constraint. Evaluating $h(x) = 1 - \gamma_\Omega(x)$ for the current state may require a separate optimization, but this cost is incurred only once per sampling instant and is decoupled from the prediction dynamics. As noted in~\cref{rem:reformulation_gamma_plus}, this evaluation can be avoided through objective regularization.

The remaining synthesis task is to formulate the geometric scaling constraint~\cref{eq:set_based_cbf_filter_set_constraint} for a given representation of $\Omega$. \Crefrange{subsec:poly_h_rep}{subsec:zonotopic_set_based_cbf} address explicit representations, in which scaled set membership maps directly to linear constraints. \Cref{subsec:predictive_safe_sets} treats the implicit MPC case and enforces membership via scalable prediction horizons. \Cref{subsec:approximation_set_based_cbf} introduces learning-based approximations that remove online optimization entirely. The overall offline-online procedure is summarized in \cref{alg:workflow}.

\begin{algorithm}[t]
\caption{Set-Based CBF Safety Filter Design}
\label{alg:workflow}
\begin{algorithmic}[1]
\Statex \textbf{Offline design (once):}
\State Obtain a control invariant set $\Omega$: classical invariance~\cite{blanchini2015set,herceg2013multi} (H-polytope), data-driven methods~\cite{rosolia2021robust,wabersich2021soft} (V-polytope), zonotopic reachability~\cite{gruber2023scalable,schafer2024}, or MPC design~\cite{rawlings2017model,wabersich2023}.
\State Verify $0\in\mathrm{int}(\Omega)$ (e.g., check $\mathcal{B}_{n_x}(r)\subseteq\Omega$ for an $r>0$).
\State Choose the class-$\mathcal{K}^e$ function $\alpha$ (e.g., $\alpha(r)=s\cdot r$, $s\in(0,1]$) per tuning guidance in \cref{sec:set_based_cbf}.
\State Formulate the safety-filter QP via the convex reformulation for the chosen representation (\cref{subsec:poly_h_rep,subsec:zonotopic_set_based_cbf,subsec:predictive_safe_sets}).
\State \textit{[Optional]} Solve parametric QP explicitly offline (\cref{subsec:poly_h_rep}) or train an approximation $\bar{h}$ (\cref{subsec:approximation_set_based_cbf}).
\Statex \hrulefill
\Statex \textbf{Online (each sampling step):}
\State Receive current state $x(k)$ and desired input $u_{\mathrm{des}}(k)$.
\State Evaluate $h(x(k))=1-\gamma_\Omega(x(k))$ and $\Delta h(x(k))$ via \cref{eq:cbf_step_bound}.
\State Solve (or look up) safety-filter QP~\cref{eq:set_based_cbf_filter}; obtain $u(k)=\kappa_{\mathrm{f}}(x(k),u_{\mathrm{des}}(k))$.
\State Apply $u(k)$ to the plant.
\end{algorithmic}
\end{algorithm}

\begin{remark}\label{rem:reformulation_gamma_plus}
	Adding a small penalty $\rho {(\gamma^+)}^2$, $\rho>0$, to the objective~\cref{eq:set_based_cbf_filter_objective} encourages that the optimal solution satisfies $\gamma^{+,*} = \gamma_\Omega(Ax+Bu)$. Consequently, $h(x)$ at the next time step can be approximated directly by $1 - \gamma^{+,*}$, removing the need for an explicit margin evaluation.
\end{remark}

\subsection{Polytopes in H-Representation}\label{subsec:poly_h_rep}
Standard algorithms for computing (robust) control invariant sets for linear systems with polytopic constraints yield sets in half-space representation (H-representation)~\cite{herceg2013multi}. A polytope containing the origin takes the form $\Omega = \{x\in\R^{n_x}\mid Hx\leq \onesVector\}$ with $H\in\R^{N_h\times n_x}$. Since $\gamma \Omega = \{x \in\R^{n_x}\mid Hx \leq \gamma \onesVector\}$, the set-scaling constraint~\cref{eq:set_based_cbf_filter_set_constraint} reduces to the linear inequality
\begin{equation}\label{eq:cbf_polytope_constraint}
	H(Ax + Bu) \leq \gamma^+ \onesVector.
\end{equation}

\subsection{Data-Driven Invariant Sets: Polytopes in V-Representation}
Data-driven robust control invariant sets, e.g., those synthesized for repetitive tasks~\cite{rosolia2021robust} or terminal trajectory bounding~\cite{wabersich2021soft}, are represented as polytopes in vertex form (V-representation): $\Omega = \left\{\sum_{i=1}^{N_v} \lambda _i v_i \;\middle|\; \sum_{i=1}^{N_v} \lambda_i = 1, \lambda_i\geq 0 \right\}$. If $0\in\Omega$, we append $v_0=0$ as a trivial vertex without changing the geometry.

Following~\cite[Lemma 2]{wabersich2021soft}, the Minkowski functional evaluates to $\gamma_{\Omega}(x) = \min_{\{\lambda_i\}} (1-\lambda_0)$ subject to the standard convex-combination constraints. This yields an exact implementation of~\cref{eq:set_based_cbf_filter_set_constraint} with non-negative multipliers $\lambda_i\geq 0$:
\begin{subequations}
	\begin{align*}
		Ax + Bu  = \sum_{i=0}^{N_v} \lambda_i v_i,\quad
		\sum_{i=0}^{N_v} \lambda_i = 1,\quad
		1-\lambda_0  = \gamma^+.
	\end{align*}
\end{subequations}

\subsection{Scalable Invariant Set Computations: Zonotopes}\label{subsec:zonotopic_set_based_cbf}
Zonotopes are a centrally symmetric class of polytopes that mitigate the curse of dimensionality and enable scalable set computations. They are given by $\Omega = \left\{c + G_z \lambda \;\middle|\; \Vert \lambda \Vert_\infty \leq 1 \right\}$, with a center $c \in \R^{n_x}$ and a generator matrix $G_z \in \R^{n_x \times N_g}$.

For large-scale dynamics, zonotopic reachability frameworks~\cite{gruber2023scalable,schafer2024} provide efficiently computable invariant under-approximations. Scaling a zonotope reduces to scaling its generator bounds: $\gamma \Omega = \{ \gamma c + G_z \lambda \mid \Vert \lambda \Vert_\infty \leq \gamma\}$ (\cref{lem:zonotope_scaling}). Substituting this into \cref{eq:set_based_gamma}, the constraint \cref{eq:set_based_cbf_filter_set_constraint} becomes
\begin{subequations}\label{eq:zonotope_reformulation}
	\begin{align}
		Ax + Bu & = \gamma^+ c +  G_z \lambda , \label{eq:zonotope_reformulation_a}              \\
		        & \Vert \lambda \Vert_\infty  \leq \gamma^+, \label{eq:zonotope_reformulation_b}
	\end{align}
\end{subequations}
with variables $\lambda \in \R^{N_g}$ and $\gamma^+ \geq 0$.


\subsection{Feasible Sets of Predictive Control Problems}\label{subsec:predictive_safe_sets}
The feasible domain of a robust predictive control problem is robust positive control invariant~\cite{rawlings2017model,wabersich2023}, which makes it a useful implicitly defined basis for a set-based \gls{CBF}. Scaling the sequence-space constraints scales the corresponding set of feasible initial conditions.

Consider the standard robust predictive control formulation evaluated over horizon $N$:
\begin{subequations}
	\label{eq:psf_nominal}
	\begin{align}
		\min_{\{u_{i}\}_{i=0}^{N-1},\{x_{i}\}_{i=0}^{N}} \quad & G(u_0, u_{\mathrm{des}}) \label{eq:psf_nominal_objective} \\
		\mathrm{s.t.}   \qquad\quad                            & x_0=x,\label{eq:psf_nominal_initial_condition} \\
		\quad                                                  & x_{N} \in \gamma \mathcal X_f,\label{eq:psf_nominal_terminal_constraint} \\
		                                                       & \text{for all }i=0,\dots,N-1: \nonumber \\
		\quad                                                  & \quad x_{i+1} = Ax_{i} + Bu_{i}, \label{eq:psf_nominal_dynamics} \\
		\quad                                                  & \quad x_{i} \in \gamma \mathcal X_i, \label{eq:psf_nominal_state_constraints} \\
		\quad                                                  & \quad u_{i} \in \gamma \mathcal U_i. \label{eq:psf_nominal_input_constraints}
	\end{align}
\end{subequations}
Setting $\gamma=1$ recovers the nominal predictive safety filter structure~\cite{wabersich2023}. Robust positive invariance is ensured by tightening the intermediate constraints $\mathcal X_i, \mathcal U_i$ and imposing a robust positively invariant terminal set $\mathcal X_f$~\cite[Sec.~15]{borrelli2017predictive}.

The set of valid initial states $x$ in~\eqref{eq:psf_nominal_initial_condition} is defined implicitly by the existence of a feasible sequence $(x_0, \{u_i\}_{i=0}^{N-1})$ and denoted by $\mathcal X_N^\gamma$. Since scaling the high-dimensional constraint set linearly scales its state-space projection (\cref{lem:scaling_of_feasible_set}), we have $\gamma\mathcal X_N^1=\mathcal X_N^\gamma$. The set-based \gls{CBF} constraint \cref{eq:set_based_cbf_filter_set_constraint} can therefore be embedded into the prediction sequence as:
\begin{subequations}
	\begin{align*}
		x_0   & = Ax+Bu,                               \\
		\quad & x_{N} \in \gamma^+ \mathcal X_f,       \\
		      & \text{for all }i=0,\dots,N-1:           \\
		\quad & \quad x_{i+1} = Ax_{i} + Bu_{i},       \\
		\quad & \quad x_{i} \in \gamma^+ \mathcal X_i, \\
		\quad & \quad u_{i} \in \gamma^+ \mathcal U_i.
	\end{align*}
\end{subequations}
This integrates a robust \gls{MPC} feasible region directly into the \gls{CBF} safety filter. Compared to standard predictive safety mechanisms~\cite{wabersich2023}, the augmented architecture adds direct intervention tuning via $\alpha$, asymptotic domain stabilization (\cref{thm:set_based_cbf}), and the option to eliminate online optimization through learning (\cref{subsec:approximation_set_based_cbf}).

\subsection{Approximate Set-based \texorpdfstring{\gls{CBF}}{CBF}}\label{subsec:approximation_set_based_cbf}
The exact reformulations above rely on tractable convex optimization. High-frequency embedded platforms may, however, preclude online solvers. In this case, the set-based \gls{CBF} can be approximated offline via imitation learning, similar to~\cite{didier2023approximate}.

The geometric properties of the set-based \gls{CBF}, in particular continuity and concavity ($h(x) = 1-\gamma_{\Omega}(x)$), together with the positive homogeneity of $\gamma_\Omega$, make it a suitable target for neural network approximation. We learn a surrogate functional $\bar\gamma_{\Omega}(x)$ via dense offline sampling and use $\bar h(x)=1-\bar\gamma_{\Omega}(x)$ directly inside the generic \gls{CBF} filter~\cref{eq:cbf_filter}. Since $\bar h$ is available in closed form, the bi-level evaluation of the successor barrier that motivated the auxiliary variable $\gamma^+$ and the set-scaling constraint~\cref{eq:set_based_cbf_filter_set_constraint} in~\cref{eq:set_based_cbf_filter} is no longer required: the online problem reduces to a low-dimensional program in $u$ alone, with a single closed-form inequality $\bar h(Ax+Bu)-\bar h(x)\geq\Delta h(x)$. For small $n_u$, this can be solved without an online convex solver by an algebraic feasibility check at $u_{\mathrm{des}}$ followed, if necessary, by a local search over $u$ (e.g., the gridded scalar-input search used in~\cref{subsec:motion_control}); this avoids any online optimization but scales poorly to higher input dimensions and is therefore best suited to single- or few-input safety filters on embedded targets.

Safety guarantees then change from hard invariance to Input-to-State Stability (ISS): an approximation error bounded by $|\gamma_{\Omega}(x)-\bar\gamma_{\Omega}(x)|\leq\epsilon$ ensures that closed-loop trajectories converge to an $\epsilon$-neighborhood of $\Omega$~\cite{didier2023approximate}. In practice, Lipschitz-bounded network architectures and post-training verification via finite gridding certify $\epsilon$.

\subsection{Representation Selection}
The reformulations above cover distinct operating regimes while preserving the guarantees of \cref{prop:nominal_set_cbf,thm:set_based_cbf}. H-polytopes (\cref{subsec:poly_h_rep}) are well-suited to low-dimensional systems with microsecond execution budgets and allow explicit offline \gls{QP} solutions (\cref{sec:electrical_machine}). Zonotopes (\cref{subsec:zonotopic_set_based_cbf}) scale to intermediate dimensions ($n_x\sim 10$-$50$) at the cost of millisecond online evaluations (\cref{subsec:inverted_pendulum}). The predictive formulation (\cref{subsec:predictive_safe_sets}) is appropriate when an explicit geometric set is unavailable or too complex. The learned approximation (\cref{subsec:approximation_set_based_cbf}) serves as a fallback when computational budgets fall below what real-time convex optimization allows, trading hard invariance for certified ISS bounds.

\section{Numerical Simulations}\label{sec:numerical_simulation}

We evaluate the set-based CBF framework on two simulation studies that highlight scalability and tunability. Rather than claiming universal superiority over existing predictive or CBF-based filters, these examples isolate the core capability outlined in Table~\ref{tab:filter_comparison}: turning a static set certificate into a tunable, dynamic CBF filter. \cref{subsec:inverted_pendulum} uses a zonotopic invariant set for a chain of inverted pendulums ($n_x=10$, $n_u=5$) and verifies asymptotic stability under robust invariance (\cref{thm:set_based_cbf}). \cref{subsec:motion_control} considers the predictive-set case and contrasts the exact set-based CBF formulation with its neural-network approximation. Both examples show how the class-$\mathcal{K}^e$ function $\alpha$ shapes the intervention behavior while preserving constraint satisfaction.

\begin{figure}[t]
	\centering
	\begin{subfigure}{\linewidth}
		\centering
		\scalebox{0.6}{
			\input{figures/example_msd_x1_x2}
			\input{figures/example_msd_x3_x4}
		}
		\caption{State trajectory and sets. Black/red marks correspond to times where $h>0$/$h<0$.}
		\label{fig:zonotope_based_cbf_sets}
	\end{subfigure}
	\begin{subfigure}{\linewidth}
		\centering
		\scalebox{0.6}{
\begin{tikzpicture}

\definecolor{crimson2143940}{RGB}{214,39,40}
\definecolor{darkgray176}{RGB}{176,176,176}
\definecolor{darkorange25512714}{RGB}{255,127,14}
\definecolor{forestgreen4416044}{RGB}{44,160,44}
\definecolor{mediumpurple148103189}{RGB}{148,103,189}
\definecolor{steelblue31119180}{RGB}{31,119,180}

\begin{axis}[
height=5cm,
width=13cm,
tick align=outside,
tick pos=left,
x grid style={darkgray176},
xlabel={time $[s]$},
xmajorgrids,
xmin=0, xmax=5,
xtick style={color=black},
y grid style={darkgray176},
ylabel={$h (x(t))$},
ymajorgrids,
ymin=-0.2, ymax=0.8,
ytick style={color=black}
]
\addplot [semithick, steelblue31119180]
table {%
0 -0.107847958260203
0.1 -0.0754935791112392
0.2 -0.0528455102772534
0.3 -0.00463014638127124
0.4 0.0720995463344322
0.5 0.0504696374075329
0.6 0.170956910932574
0.7 0.137837495700296
0.8 0.254271130512577
0.9 0.235993369655819
1 0.165195357142848
1.1 0.257579787025298
1.2 0.205238426524231
1.3 0.147421936142626
1.4 0.153883767002471
1.5 0.228631908940787
1.6 0.271268052348624
1.7 0.229777382715267
1.8 0.247701041117565
1.9 0.261519139641771
2 0.351511366944737
2.1 0.246057954198814
2.2 0.213131007897394
2.3 0.280962480038918
2.4 0.196673705418673
2.5 0.210076061934439
2.6 0.271375991225393
2.7 0.297114327238368
2.8 0.263849226326255
2.9 0.293728676048386
3 0.392422683178135
3.1 0.307601030820018
3.2 0.309160827998457
3.3 0.311457165680964
3.4 0.25428830435592
3.5 0.318544706674715
3.6 0.256097942648805
3.7 0.198613447315386
3.8 0.220624580273805
3.9 0.19707094028875
4 0.247761728398186
4.1 0.173433208612888
4.2 0.286061544520862
4.3 0.263706809443782
4.4 0.248940647801739
4.5 0.220715565655746
4.6 0.242728190079809
4.7 0.17002054904578
4.8 0.168549819190491
4.9 0.169591183519107
5 0.118713808492222
};
\addplot [semithick, darkorange25512714]
table {%
0 -0.107847958260203
0.1 -0.0754935785581892
0.2 0.0446190458025524
0.3 0.134893580181916
0.4 0.0944255023270758
0.5 0.168015368259179
0.6 0.1539416650394
0.7 0.256859009634851
0.8 0.190459743551711
0.9 0.266204192272988
1 0.335191720712358
1.1 0.461802950427441
1.2 0.605358861229743
1.3 0.570074320664439
1.4 0.552721321717069
1.5 0.469168387173065
1.6 0.499342930630989
1.7 0.537516190680802
1.8 0.451825641946886
1.9 0.496458558389794
2 0.407617220132308
2.1 0.304610177008136
2.2 0.213227118904454
2.3 0.163263921219081
2.4 0.114284701429072
2.5 0.0799992846098367
2.6 0.0559994388073732
2.7 0.0391995073697554
2.8 0.0274396458680919
2.9 0.019207700372455
3 0.0134453446596179
3.1 0.00941169500601158
3.2 0.00658818157573027
3.3 0.00461172399984044
3.4 0.00322811015166213
3.5 0.00225966426374791
3.6 0.171239038942326
3.7 0.119867323134924
3.8 0.0961108504731893
3.9 0.0742612116270025
4 0.0519828402975454
4.1 0.0454917177946204
4.2 0.0318441990249581
4.3 0.0222909234628256
4.4 0.0156036438851435
4.5 0.12709590091218
4.6 0.267523107443773
4.7 0.338869337687481
4.8 0.377831062545657
4.9 0.300648796962184
5 0.28541971088205
};
\addplot [semithick, forestgreen4416044]
table {%
0 -0.107847958260203
0.1 -0.0754935772017382
0.2 0.017348450739601
0.3 0.0121437981130387
0.4 0.0637639003829767
0.5 0.0486130368241444
0.6 0.0530070750307597
0.7 0.166544674376909
0.8 0.215468264359262
0.9 0.216558062752349
1 0.156746683369432
1.1 0.109722573855068
1.2 0.0768057971373143
1.3 0.0537640458455496
1.4 0.103505024437409
1.5 0.0724534416288954
1.6 0.050717366902589
1.7 0.0440057242090636
1.8 0.0900060086792375
1.9 0.0630042037252158
2 0.0789044599904736
2.1 0.0552331091847844
2.2 0.0386631344563061
2.3 0.0270641667097945
2.4 0.0189448736550846
2.5 0.0132613930978475
2.6 0.00928291611828314
2.7 0.00649802110972264
2.8 0.00454859007217301
2.9 0.0329587049196982
3 0.0306211530716716
3.1 0.152508071963212
3.2 0.125801109091224
3.3 0.0880607614131687
3.4 0.205969858589192
3.5 0.287106937065346
3.6 0.200974851610764
3.7 0.140682389592735
3.8 0.154232841529602
3.9 0.107962986367897
4 0.169496274563988
4.1 0.133834576657702
4.2 0.165610298577878
4.3 0.139984534787701
4.4 0.149038763878745
4.5 0.231295032397422
4.6 0.346187375097596
4.7 0.381661975413914
4.8 0.408156172021653
4.9 0.429546338524976
5 0.442759541676181
};
\addplot [semithick, crimson2143940]
table {%
0 -0.107847958260203
0.1 -0.0202985504004434
0.2 0.0618306159157963
0.3 0.156083616729415
0.4 0.228013819062367
0.5 0.255481399046319
0.6 0.216548405376727
0.7 0.29831618368427
0.8 0.427161102428983
0.9 0.389117250138485
1 0.311576400367188
1.1 0.388250913704326
1.2 0.437878714219536
1.3 0.441443039522878
1.4 0.365876450564447
1.5 0.277557337325442
1.6 0.380945652208701
1.7 0.384796345508159
1.8 0.306787174228931
1.9 0.355393852438749
2 0.287522862005546
2.1 0.387000170189636
2.2 0.358059815462182
2.3 0.46658064177243
2.4 0.328227534843873
2.5 0.266873760812886
2.6 0.349152859715417
2.7 0.25416801786806
2.8 0.25195579858342
2.9 0.32310389513455
3 0.252548718157223
3.1 0.195060514914986
3.2 0.153667230480774
3.3 0.107567058093866
3.4 0.075629187000662
3.5 0.0795309728843788
3.6 0.0556716545241164
3.7 0.0389701113889898
3.8 0.0272790755711373
3.9 0.0190953286082897
4 0.013366723720634
4.1 0.00935669419639795
4.2 0.00654968243939902
4.3 0.00458477533357093
4.4 0.0032092887217664
4.5 0.00224648139398531
4.6 0.00157240804591097
4.7 0.00110067486143917
4.8 0.000770369256834602
4.9 0.000539201140963175
5 0.174639079644614
};
\addplot [semithick, mediumpurple148103189]
table {%
0 -0.107847958260203
0.1 -0.0754935741712579
0.2 -0.00718000125638096
0.3 0.135756784021108
0.4 0.0959885783137964
0.5 0.176977699301869
0.6 0.307896908974093
0.7 0.243045827842442
0.8 0.303265329872799
0.9 0.260303781258835
1 0.185105294953683
1.1 0.134415025323746
1.2 0.127408877191642
1.3 0.203608613142129
1.4 0.316668339507459
1.5 0.338552560529106
1.6 0.436811253965662
1.7 0.412821320977197
1.8 0.344004137699886
1.9 0.372857373666978
2 0.323492662426039
2.1 0.47377732809579
2.2 0.509871808595858
2.3 0.597333500886736
2.4 0.692454718166886
2.5 0.610996158078673
2.6 0.527143680543772
2.7 0.548046744921341
2.8 0.556654453465794
2.9 0.659418211383469
3 0.700563702081993
3.1 0.728046483302
3.2 0.734525429707851
3.3 0.67255663594158
3.4 0.618285431438253
3.5 0.661296037484072
3.6 0.70333063944765
3.7 0.671468074864334
3.8 0.705882098631288
3.9 0.638276626266602
4 0.587747607020712
4.1 0.648669030936586
4.2 0.659525859385418
4.3 0.574483120886854
4.4 0.587826920177738
4.5 0.660135732487432
4.6 0.726266767392873
4.7 0.614512126036669
4.8 0.684861622215864
4.9 0.721624676057001
5 0.649181192844747
};
\end{axis}

\end{tikzpicture}
		}
		\caption{Evolution of the zonotope-based \glspl{CBF}~$h$ for different $u_{\mathrm{des}}(k)$.}
		\label{fig:zonotope_based_cbf_trajectories}
	\end{subfigure}
	\caption{Zonotope-based CBF simulation: pendulum chain ($n_x=10$). The CBF asymptotically stabilizes $\Omega$ from initial conditions outside the safe set.}
	\label{fig:zonotope_combined}
\end{figure}

\subsection{Chain of Inverted Pendulums}\label{subsec:inverted_pendulum}

To demonstrate the synthesis workflow on a higher-dimensional system, we consider a chain of $5$~inverted pendulums~\cite[Sec.~5]{conte2012}, yielding $n_x = 10$ and $n_u = 5$, with a sampling time of $T_s = 0.1\,\si{\second}$.
The system is subject to state and input constraints $\|x\|_\infty \leq 10$ and $\|u\|_\infty \leq 10$.
To invoke the recovery guarantees of \cref{thm:set_based_cbf}, we introduce the additive disturbance set $\mathcal W = [-1,1]^{10}$ and formulate the convex optimization problem using CVXPY~\cite{diamond2016cvxpy} with the Clarabel solver~\cite{clarabel2024}.

The offline design benefits from efficient zonotopic set computations: combining the methods of \cite{gruber2023scalable} and \cite{schafer2024}, a 50-generator zonotopic under-approximation of the maximal robust control invariant set~$\tilde\Omega$ is obtained in roughly \SI{1}{\second}. By contrast, computing the maximal (non-robust) control invariant set with classical polytopic methods~\cite{herceg2013multi} exceeded a \SI{24}{\hour} timeout. Online, the safety-filter \gls{QP} (via the zonotopic reformulation in \cref{subsec:zonotopic_set_based_cbf}) has $n_u + N_g = 55$ decision variables and evaluates in \SI{3.4}{\milli\second} on average ($\sigma = \SI{1.7}{\milli\second}$) per time step. The step bound is parameterized as $\alpha(r) = s \cdot r$ with $s = 0.3$.

\cref{fig:zonotope_based_cbf_sets} shows the simulation results for $\Omega = 0.9\,\tilde\Omega$, initialized at $x_0 = \begin{bmatrix} -9 & 9 & 0 & \ldots & 0\end{bmatrix}^T \notin \Omega$, with the desired input $u_{\mathrm{des}}(k) \in \R^5$ sampled uniformly from $\mathcal U$ at each step. \cref{fig:zonotope_based_cbf_trajectories} traces the evolution of the zonotope-based \glspl{CBF}~$h$ across these sequences. The trajectories consistently converge to the safe set, and once interior, the prescribed damping dynamics govern any subsequent boundary approach.

\subsection{Motion Control}\label{subsec:motion_control}

To assess practical applicability, we consider a lateral motion-control task for an autonomous vehicle. The plant is a discrete-time linearized single-track vehicle with six states $x = [\psi_e, \dot\psi_e, \beta, y_e, \delta, \dot\delta]^\top$ and a single input $u = \delta_{\mathrm{des}}$. Operating at \SI{30}{\kilo\meter\per\hour} with a \SI{50}{\milli\second} sampling time, the system must satisfy box constraints, including $|y_e|\leq 0.3\,\si{\meter}$, $|\delta|\leq 0.26\,\si{\radian}$, and $|u|\leq \pi/15\,\si{\radian}$. The set-based \gls{CBF} is defined implicitly via the feasible set of a nominal predictive controller ($N=80$, discrete-time LQR terminal set~\cite{borrelli2017predictive}) as in \cref{subsec:predictive_safe_sets}. We compare this exact formulation against a neural-network approximation (\cref{subsec:approximation_set_based_cbf}).

\cref{fig:motion_control} shows the CBF topography over the state space (left) and the closed-loop response to a constant, unsafe steering command (right). The neural-network approximation ($2\times 2048$ neurons, ReLU, $10^6$ samples~\cite{paszke2017automatic}) is evaluated for $s\in\{0.05, 0.1\}$. According to~\eqref{eq:tuning_bound}, $s=0.05$ enforces early, heavily damped intervention, while $s=0.1$ permits a more aggressive boundary approach. These relatively low values reflect the slow \SI{50}{\milli\second} sampling rate relative to the vehicle dynamics, in line with the slow-sampled tuning guidance in \cref{sec:set_based_cbf}. The exact filter (Acados SQP-RTI, $N=80$) solves in \SI{5.95}{\milli\second}/step and satisfies the constraints; the neural approximation (evaluating in \SI{5.47}{\milli\second}/step on a 128-point grid) reproduces the targeted intervention profile. The approximation is attractive when the online solver is the bottleneck: it removes the iterative QP/SQP loop in favor of a fixed-cost, branch-free forward pass with deterministic timing and no warm-start or feasibility-recovery logic, batches naturally on parallel hardware (here, ~$128\times$ faster per query than the exact CBF evaluation), and scales to embedded targets without a numerical solver in the loop, at the price of the bounded-error ISS-style safety guarantee discussed in \cref{subsec:approximation_set_based_cbf}.

\begin{figure}
	\centering
	\includegraphics[width=0.5\linewidth]{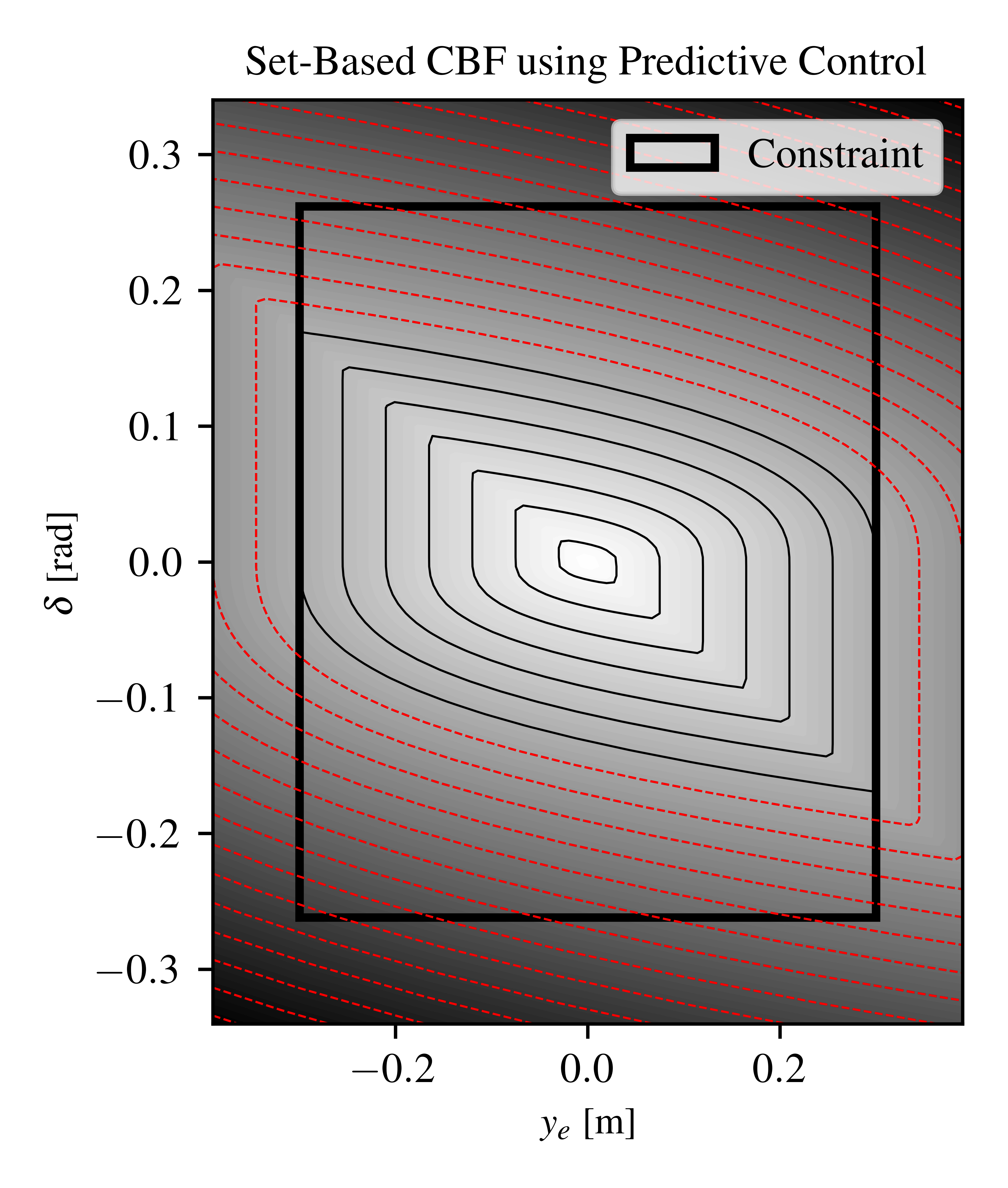}~
	\includegraphics[width=0.5\linewidth]{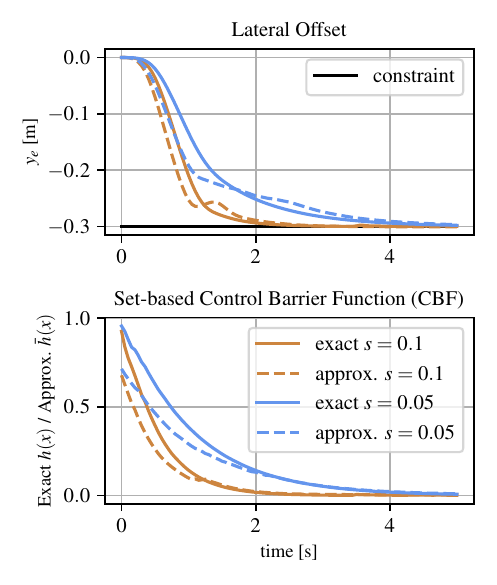}
	\caption{Lateral motion control example. \textbf{Left:} Set-based CBF topography utilizing the feasible set of an \gls{MPC}~(\cref{subsec:predictive_safe_sets}), plotted over steering angle $\delta$ and lateral offset $y_e$. Values range from 1 (white) to darker bounds. Black contours indicate the safe region ($h(x)\geq 0$); red contours denote $h(x) < 0$. \textbf{Right:} Closed-loop trajectories under the safety filter~\eqref{eq:set_based_cbf_filter} subject to a constant unsafe desired steering angle. The exact filter is compared against the neural approximation (\cref{subsec:approximation_set_based_cbf}) across different damping parameters $s$. Following~\eqref{eq:tuning_bound}, $s=0.05$ restricts the per-step margin decay to $5\%$, yielding earlier and smoother intervention than $s=0.1$.}
	\label{fig:motion_control}
\end{figure}

\section{Experiment: Electric Motor}\label{sec:electrical_machine}

\begin{figure}
	\centering
	\includegraphics*[width=\linewidth]{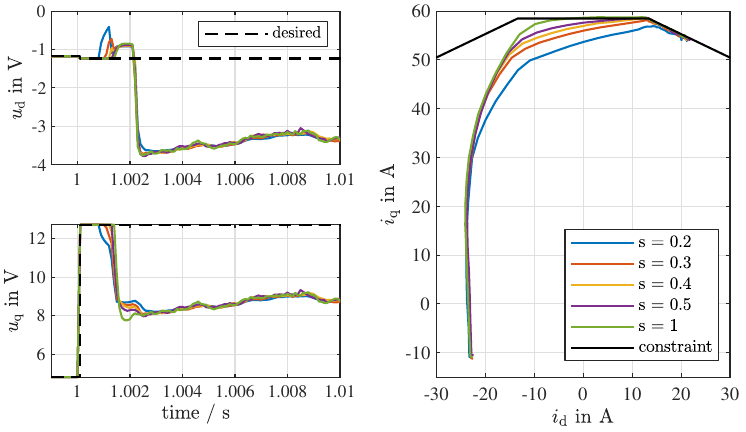}
	\caption{Experimental test bench results for an electric machine subject to a step change in desired input voltages at $t = \SI{1}{\second}$. The \gls{CBF} step bound parameter $s\in[0.2,1]$ defines $\alpha(h(x))=s\cdot h(x)$ in~\eqref{eq:cbf_step_bound}. Following~\eqref{eq:tuning_bound}, $s=0.2$ bounds the per-step safety-margin decay to $20\%$, yielding early, smooth intervention. Conversely, $s=1$ allows the margin to immediately exhaust to zero, recovering the standard invariance-filter baseline.}
	\label{fig:testbench_cbf_emachine}
\end{figure}

This section moves the set-based \gls{CBF} formulation to physical hardware and addresses the deployment challenges highlighted in \cref{tab:filter_comparison}. Specifically, we map an invariant set for a constrained linear system into a tunable \gls{CBF} filter that executes within a microsecond sampling budget.
The full workflow (\cref{sec:design_and_implementation}) is deployed on a \gls{PMSM} test bench: (i)~compute the maximal control invariant set for the dq-current model (\cref{subsec:poly_h_rep}), (ii)~formulate the set-based \gls{CBF} filter as a parametric \gls{QP}~\eqref{eq:cbf_polytope_constraint}, (iii)~solve the parametric \gls{QP} offline to guarantee execution within a \SI{150}{\micro\second} budget, and (iv)~tune the intervention parameter~$s$ directly on the physical system.

The \gls{PMSM} operates at a constant $1000\,\si{rpm}$ and is modeled as an LTI system~\eqref{eq:linear_system} with states $x = [i_{\mathrm{d}},\; i_{\mathrm{q}}]^\top$ and inputs $u = [u_{\mathrm{d}},\; u_{\mathrm{q}}, 1]^\top$, subject to polytopic constraints $\mathcal{X}$ and $\mathcal{U}$; see \cite{rosch2023predictive} for modeling details. The set-based \gls{CBF} uses the maximal positive control invariant set in H-representation and forms $h(x) = 1 - \gamma_{\Omega}(x)$. With the step bound $\alpha(h(x)) = s\cdot h(x)$ for $s \in [0.2, 1]$, the filter reduces to a parametric \gls{QP} in $\theta = [x^\top,\; u_{\mathrm{des}}^\top,\; h(x)]^\top$. Using the MPT toolbox~\cite{herceg2013multi}, this is solved explicitly offline and yields a piecewise affine control law partitioned into 127 regions.

On dSPACE1007 hardware, the mean evaluation time ranges from \SI{14.5}{\micro\second} ($s=1$) to \SI{23.7}{\micro\second} ($s=0.5$). Since the explicit law relies only on region localization and affine evaluations, the execution time is deterministic. The theoretical worst-case evaluation across all 127 regions is bounded by \SI{28.04}{\micro\second}, leaving \SI{121.96}{\micro\second} for ADC/PWM operations. As shown in \cref{fig:testbench_cbf_emachine}, decreasing $s$ produces earlier and smoother intervention, while $s=1$ recovers the strict invariance baseline. By satisfying (P1)-(P4) (\cref{sec:problem_setting}), this approach replaces the iterative online horizon optimization of predictive baselines~\cite{rosch2023predictive} with a single, deterministically bounded explicit evaluation. Structurally, the difference is that the predictive filter tunes behavior through horizon, cost, and terminal-set choices solved online, whereas the set-based \gls{CBF} fixes the invariant geometry offline and exposes the scalar parameter $s$ for commissioning-time adjustment.

The experiment thus confirms the two properties most relevant for embedded deployment: certifiable timing and transparent single-parameter tuning. The out-of-bound recovery mechanism of \cref{thm:set_based_cbf} is not exercised here, since the enforced inner current loop keeps the state within $\Omega$; its effectiveness is instead demonstrated in the pendulum-chain study (\cref{subsec:inverted_pendulum}).

\section{Conclusion}\label{sec:conclusion}
We presented a practical workflow that turns any available control invariant set for a linear constrained system into a convex, real-time safety filter with a single class-$\mathcal{K}^e$ tuning parameter. The construction inherits the offline scalability of established set-computation tools (polytopic, zonotopic, or \gls{MPC}-implicit) and provides constraint satisfaction within $\Omega$, asymptotic recovery from $\tilde\Omega\setminus\Omega$, and a learning-based approximation with bounded-error (ISS) safety. Hardware validation on a \gls{PMSM} drive achieves worst-case \SI{28.04}{\micro\second} per evaluation within a \SI{150}{\micro\second} budget and confirms that the tuning parameter $s$ shapes closed-loop behavior in a physically interpretable way.
\begin{appendices}
	\section{Proof of \cref{prop:nominal_set_cbf}}
	The proof verifies the conditions of \cref{def:barrier_function} by assembling known properties of the Minkowski functional (parts a-b) with a scaling argument for control invariant sets (part c), and then showing forward invariance directly (part~d).
	Parts a), b), and c) show basic properties of the set-based \gls{CBF}, leading to invariance in part d):
	\paragraph*{a) Safe set $\Omega$ and $h(x)$} We show that the control invariant set $\Omega$ equals the safe set of \cref{def:barrier_function}, i.e., $x\in\Omega \Leftrightarrow h(x) \geq 0$. Since $\gamma_{\Omega}(x)$ is the Minkowski functional and $\Omega$ is convex with $0\in\mathrm{int}(\Omega)$, we have $\Omega = \{x ~|~ \gamma_{\Omega}(x) \leq 1\}$~\cite[Section 3.1.2]{blanchini2015set}. Therefore, $x\in\Omega \Leftrightarrow \gamma_{\Omega}(x) \leq 1 \Leftrightarrow h(x) = 1 - \gamma_{\Omega}(x) \geq 0$. Moreover, $\gamma_{\Omega}(x)$ is well-defined for all $x\in\R^{n_x}$ since $0\in\mathrm{int}(\Omega)$.
	\paragraph*{b) Continuity and concavity of $h(x)$}\label{par:proof_implicit_cbf_continuity}
	Since $\gamma_{\Omega}(x)$ is the Minkowski functional of the compact and convex set $\Omega$ with $0\in\mathrm{int}(\Omega)$, it is continuous and convex on $\R^{n_x}$~\cite[Section 3.1.2]{blanchini2015set}. Therefore, $h(x) = 1 - \gamma_{\Omega}(x)$ is continuous and concave on $\R^{n_x}$.
	\paragraph*{c) For all $x\in \Omega$ the bound \cref{eq:cbf_step_bound} holds}
	For all $x\in\Omega$, we have $x\in\gamma_{\Omega}(x)\Omega$. The scaled set $\gamma_{\Omega}(x)\Omega$ is control invariant because for any $y\in\gamma_{\Omega}(x)\Omega$ one can write $y=\gamma_{\Omega}(x)z$ with $z\in\Omega$; control invariance of $\Omega$ provides $u\in\mathcal U$ with $Az+Bu\in\Omega$, and linearity gives $Ay+B(\gamma_{\Omega}(x)u)=\gamma_{\Omega}(x)(Az+Bu)\in\gamma_{\Omega}(x)\Omega$ with $\gamma_{\Omega}(x)u\in\mathcal U$ since $\gamma_{\Omega}(x)\leq 1$ and $0\in\mathcal U$~\cite[Section~3.1.2]{blanchini2015set}. Therefore, there exists an input $u\in\mathcal U$ such that $Ax+Bu\in\gamma_{\Omega}(x)\Omega$, implying $\gamma_{\Omega}(Ax+Bu)\leq \gamma_{\Omega}(x)$ and therefore $h(Ax+Bu)\geq h(x)$. Since $\Delta h(x)\leq 0$ for all $x\in\Omega$, the condition $h(Ax+Bu) - h(x) \geq \Delta h(x)$ in~\cref{eq:cbf_condition} is satisfied, explicitly satisfying the first case of~\cref{eq:cbf_step_bound}.
	\paragraph*{d) Forward invariance of $\Omega$} From part a), $\Omega=\{x:h(x)\geq0\}$. For any $x\in\Omega$ and any input $u\in K_{\mathrm{CBF}}(x)$, \cref{eq:cbf_condition} gives $h(Ax+Bu)\geq h(x)+\Delta h(x)$. Since $h(x)\geq0$, the first case of~\cref{eq:cbf_step_bound} implies $h(x)+\Delta h(x)=h(x)-\min(\alpha(h(x)),h(x))\geq0$. Hence $Ax+Bu\in\Omega$. Applying this implication recursively shows that $\Omega$ is forward invariant for any policy selecting inputs from $K_{\mathrm{CBF}}(x)$ and any initial condition $x(0)\in\Omega$.
	\section{Technicalities}
	\begin{lemma}\label{lem:zonotope_scaling}
		For a zonotope $\Omega = \{x = c + G_z \lambda ~|~ \Vert \lambda \Vert_\infty \leq 1\}$ and $\gamma \geq 0$, the scaled zonotope is given by $\gamma \Omega = \{x = \gamma c + G_z \lambda ~|~ \Vert \lambda \Vert_\infty \leq \gamma\}$.
	\end{lemma}
	\begin{proof}
		By definition of the zonotopic representation, we have
		\begin{align*}
			\gamma \Omega
			 & = \left\{x = \gamma (c + G_z \lambda) ~\middle|~ \Vert \lambda \Vert_\infty \leq 1 \right\}                                                     \\
			 & = \left\{x = \gamma c + G_z (\gamma \lambda) ~\middle|~ \Vert \lambda \Vert_\infty \leq 1 \right\}                                              \\
			 & = \left\{x = \gamma c + G_z \tilde{\lambda} ~\middle|~ \tilde{\lambda} = \gamma \lambda, \Vert \lambda \Vert_\infty \leq 1 \right\}             \\
			 & = \left\{x = \gamma c + G_z \tilde{\lambda} ~\middle|~ \tilde{\lambda} = \gamma \lambda, \Vert \gamma \lambda \Vert_\infty \leq \gamma \right\} \\
			 & = \left\{x = \gamma c + G_z \tilde{\lambda} ~\middle|~ \Vert \tilde{\lambda} \Vert_\infty \leq \gamma \right\},
		\end{align*}
		where we assume $\gamma \geq 0$. This reformulation avoids products of decision variables ($\gamma$ and $\lambda$), yielding the convex constraint~\cref{eq:zonotope_reformulation_b}.
	\end{proof}

	\begin{lemma}\label{lem:scaling_of_feasible_set}
		For all $\gamma\geq 0$ it holds that $\gamma\mathcal X_N^1=\mathcal X_N^\gamma$.
	\end{lemma}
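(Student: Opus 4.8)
The plan is to prove the set equality $\gamma\mathcal X_N^1 = \mathcal X_N^\gamma$ by establishing both inclusions, exploiting the linearity of the prediction dynamics \cref{eq:psf_nominal_dynamics} together with the fact that every constraint set in \cref{eq:psf_nominal_terminal_constraint}--\cref{eq:psf_nominal_input_constraints} is scaled by the same factor $\gamma$. The central observation is that the feasibility conditions defining $\mathcal Z_N^\gamma$ are invariant under the simultaneous scaling $(x_i,u_i)\mapsto(\gamma x_i,\gamma u_i)$ of the entire trajectory. I would treat $\gamma>0$ and the degenerate case $\gamma=0$ separately.

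For $\gamma>0$, to show $\gamma\mathcal X_N^1\subseteq\mathcal X_N^\gamma$, I would take $y\in\gamma\mathcal X_N^1$, write $y=\gamma x$ with $x\in\mathcal X_N^1$, and fix a feasible input sequence $\{u_i\}$ with associated state trajectory $\{x_i\}$ (so $x_0=x$) satisfying $(x,\{u_i\})\in\mathcal Z_N^1$. Setting $\tilde x_i\triangleq\gamma x_i$ and $\tilde u_i\triangleq\gamma u_i$, linearity of \cref{eq:psf_nominal_dynamics} gives $\tilde x_{i+1}=A(\gamma x_i)+B(\gamma u_i)=A\tilde x_i+B\tilde u_i$, so the scaled trajectory is again dynamically consistent. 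Each constraint $x_i\in\mathcal X_i$, $u_i\in\mathcal U_i$, $x_N\in\mathcal X_f$ then scales to $\tilde x_i\in\gamma\mathcal X_i$, $\tilde u_i\in\gamma\mathcal U_i$, $\tilde x_N\in\gamma\mathcal X_f$ directly from the convention $c\mathcal A=\{ca\mid a\in\mathcal A\}$. Hence $(y,\{\tilde u_i\})\in\mathcal Z_N^\gamma$ and $y\in\mathcal X_N^\gamma$.

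The reverse inclusion is symmetric: for $y\in\mathcal X_N^\gamma$ with feasible $\{u_i\}$ and trajectory $\{x_i\}$, I would divide by $\gamma$ (here $\gamma>0$ is essential) and set $\bar x_i\triangleq\gamma^{-1}x_i$, $\bar u_i\triangleq\gamma^{-1}u_i$; the identical argument yields $(\gamma^{-1}y,\{\bar u_i\})\in\mathcal Z_N^1$, so $\gamma^{-1}y\in\mathcal X_N^1$ and $y\in\gamma\mathcal X_N^1$. For the boundary case $\gamma=0$, the convention $0\cdot\mathcal A=\{0\}$ collapses every constraint set in \cref{eq:psf_nominal_terminal_constraint}--\cref{eq:psf_nominal_input_constraints} to the singleton $\{0\}$, so feasibility forces $x_i=u_i=0$ and $x_N=0$; the all-zero trajectory satisfies \cref{eq:psf_nominal_dynamics} trivially, whence $\mathcal X_N^0=\{0\}$. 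Since $\mathcal X_N^1$ is non-empty ($\mathcal Z_N^1$ being a non-empty robust control invariant set), we also have $0\cdot\mathcal X_N^1=\{0\}$, so equality holds.

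I do not expect a genuine obstacle, since the claim is a direct consequence of the homogeneity of linear maps and of scalar set multiplication. The only points requiring care are (i) verifying that the \emph{entire} state trajectory, and not merely the initial state and the inputs, scales consistently — which is precisely where linearity of the dynamics enters — and (ii) isolating the degenerate case $\gamma=0$, where division is not permitted and the set multiplication must be read through the stated notational convention.
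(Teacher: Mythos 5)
Your proof is correct and follows essentially the same route as the paper: both arguments rest on the homogeneity of the linear dynamics and the uniform scaling of all constraint sets, with the degenerate case $\gamma=0$ treated separately via the convention $0\cdot\mathcal A=\{0\}$. The only cosmetic difference is that the paper eliminates the state trajectory using the explicit solution formula and manipulates $\mathcal Z_N^\gamma$ as a set equality, whereas you carry the scaled trajectory $(\gamma x_i,\gamma u_i)$ explicitly and argue both inclusions elementwise; these are the same argument in different notation.
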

	\begin{proof}
		In the case that $\gamma=0$, it follows from \cref{eq:psf_nominal_initial_condition} and \cref{eq:psf_nominal_state_constraints} that $\mathcal X^0_N=\{0\}$, which equals $0\mathcal X^1_N$ by definition.
		For $\gamma>0$, we have
		$\gamma \mathcal Z_N^1  = \{ (\gamma z, \{\gamma v_{i}\}) ~|~ (z, \{v_{i}\}) \in\mathcal Z_N^1\}$,
		where we substitute $(\gamma z, \{\gamma v_{i}\})\triangleq (x, \{u_{i}\})$ to obtain
		\begin{align}
			\gamma \mathcal Z_N^1 & = \{ (x, \{u_{i}\}) ~|~ (\gamma^{-1}x, \gamma^{-1}\{u_{i}\}) \in\mathcal Z_N^1\}.
			\label{eq:lem_scaling_inverse_relation}
		\end{align}
		Next, we rewrite~\cref{eq:lem_scaling_inverse_relation} as
		\begin{align*}
			 & \left\{ (x, \{u_{i}\}) \middle|
			\begin{matrix*}[l]
				& (A^N \gamma^{-1}x + \Sigma_{l=0}^{N-1}A^{N-l-1}B \gamma^{-1}u_{l}) \in \mathcal X_f, \\
				& \forall i\in\{0,..N-1\}:                                                               \\
				& \quad (A^i \gamma^{-1}x + \Sigma_{l=0}^{i-1}A^{i-l-1}B \gamma^{-1}u_{l}) \in \mathcal X_i,\\
				& \quad  \gamma^{-1}u_{i} \in \mathcal U_i
			\end{matrix*}
			\right\}                                   \\
			 & =\mathcal Z^\gamma_N
		\end{align*}
		using $x_{0} = x$ and by inserting the dynamics~\cref{eq:psf_nominal_dynamics}.
		Through the relation $\mathcal Z^\gamma_N= \gamma\mathcal Z^1_N$ we finally obtain
		\begin{align*}
			\mathcal X^\gamma_N & = \{x ~|~ (x, \{u_{i}\}) \in\gamma \mathcal Z_N^1\}                 \\
			                    & = \{x ~|~ (\gamma^{-1}x, \{\gamma^{-1}u_{i}\}) \in \mathcal Z_N^1\} \\
			                    & = \{\gamma z ~|~ (z, \{v_{i}\}) \in \mathcal Z_N^1\}=\gamma \mathcal X^1_N,
		\end{align*}
		showing the desired result.
	\end{proof}

\end{appendices}

\bibliographystyle{IEEEtran}
\bibliography{bib}

@article{clarabel2024,
  author    = {P. J. Goulart and Y. Chen},
  title     = {{Clarabel}: An Interior-Point Solver for Conic Programs with Quadratic Objectives},
  journal   = {Math. Program. Comput.},
  year      = {2026},
  doi       = {10.1007/s12532-026-00320-7},
  note      = {Published online}
}

@book{blanchini2015set,
  title={Set-Theoretic Methods in Control},
  author={Blanchini, Franco and Miani, Stefano},
  year={2015},
  publisher={Springer}
}

@article{liniger2017real,
  author    = {A. Liniger and J. Lygeros},
  title     = {Real-Time Control for Autonomous Racing Based on Viability Theory},
  journal   = {IEEE Trans. Control Syst. Technol.},
  volume    = {27},
  number    = {2},
  pages     = {464--478},
  year      = {2019}
}

@inproceedings{gurriet2018towards,
  author    = {T. Gurriet and A. Singletary and J. Reher and L. Ciarletta and E. Feron and A. Ames},
  title     = {Towards a Framework for Realizable Safety Critical Control Through Active Set Invariance},
  booktitle = {Proc. ACM/IEEE 9th Int. Conf. Cyber-Physical Systems (ICCPS)},
  pages     = {98--106},
  year      = {2018}
}

@inproceedings{leeman2023predictive,
  author    = {A. Leeman and J. K{\"o}hler and S. Bennani and M. Zeilinger},
  title     = {Predictive Safety Filter Using System Level Synthesis},
  booktitle = {Proc. Learning for Dynamics and Control Conf.},
  pages     = {1180--1192},
  year      = {2023}
}

@article{gruber2023scalable,
  author    = {F. Gruber and M. Althoff},
  title     = {Scalable Robust Safety Filter With Unknown Disturbance Set},
  journal   = {IEEE Trans. Autom. Control},
  volume    = {68},
  number    = {12},
  pages     = {7756--7770},
  year      = {2023}
}

@article{diamond2016cvxpy,
  author    = {S. Diamond and S. Boyd},
  title     = {{CVXPY}: {A} {P}ython-Embedded Modeling Language for Convex Optimization},
  journal   = {J. Mach. Learn. Res.},
  volume    = {17},
  number    = {83},
  pages     = {1--5},
  year      = {2016}
}

@inproceedings{choi2021robust,
  author    = {J. J. Choi and D. Lee and K. Sreenath and C. J. Tomlin and S. L. Herbert},
  title     = {Robust Control Barrier-Value Functions for Safety-Critical Control},
  booktitle = {Proc. 60th IEEE Conf. Decision and Control (CDC)},
  pages     = {6814--6821},
  year      = {2021}
}

@article{ames2016control,
  author    = {A. D. Ames and X. Xu and J. W. Grizzle and P. Tabuada},
  title     = {Control Barrier Function Based Quadratic Programs for Safety Critical Systems},
  journal   = {IEEE Trans. Autom. Control},
  volume    = {62},
  number    = {8},
  pages     = {3861--3876},
  year      = {2017}
}

@article{fisac2018general,
  author    = {J. F. Fisac and A. K. Akametalu and M. N. Zeilinger and S. Kaynama and J. Gillula and C. J. Tomlin},
  title     = {A General Safety Framework for Learning-Based Control in Uncertain Robotic Systems},
  journal   = {IEEE Trans. Autom. Control},
  volume    = {64},
  number    = {7},
  pages     = {2737--2752},
  year      = {2019}
}

@inproceedings{wabersich2018linear,
  author    = {K. P. Wabersich and M. N. Zeilinger},
  title     = {Linear Model Predictive Safety Certification for Learning-Based Control},
  booktitle = {Proc. IEEE Conf. Decision and Control (CDC)},
  pages     = {7130--7135},
  year      = {2018}
}

@article{rosolia2021robust,
  author    = {U. Rosolia and X. Zhang and F. Borrelli},
  title     = {Robust Learning Model-Predictive Control for Linear Systems Performing Iterative Tasks},
  journal   = {IEEE Trans. Autom. Control},
  volume    = {67},
  number    = {2},
  pages     = {856--869},
  year      = {2022}
}

@article{schafer2024,
  author    = {L. Schäfer and F. Gruber and M. Althoff},
  title     = {Scalable Computation of Robust Control Invariant Sets of Nonlinear Systems},
  journal   = {IEEE Trans. Autom. Control},
  volume    = {69},
  number    = {2},
  pages     = {755--770},
  year      = {2024},
  doi       = {10.1109/TAC.2023.3275305}
}

@inproceedings{conte2012,
  author    = {C. Conte and N. R. Voellmy and M. N. Zeilinger and M. Morari and C. N. Jones},
  title     = {Distributed Synthesis and Control of Constrained Linear Systems},
  booktitle = {Proc. Amer. Control Conf. (ACC)},
  pages     = {6017--6022},
  year      = {2012},
  doi       = {10.1109/ACC.2012.6314654}
}

@inproceedings{herceg2013multi,
  author    = {M. Herceg and M. Kvasnica and C. N. Jones and M. Morari},
  title     = {Multi-Parametric Toolbox 3.0},
  booktitle = {Proc. Eur. Control Conf. (ECC)},
  pages     = {502--510},
  year      = {2013}
}

@article{wabersich2021soft,
  author    = {K. P. Wabersich and R. Krishnadas and M. N. Zeilinger},
  title     = {A Soft Constrained {MPC} Formulation Enabling Learning From Trajectories With Constraint Violations},
  journal   = {IEEE Control Syst. Lett.},
  volume    = {6},
  pages     = {980--985},
  year      = {2022}
}

@article{alan2023control,
  author    = {A. Alan and A. J. Taylor and C. R. He and A. D. Ames and G. Orosz},
  title     = {Control Barrier Functions and Input-to-State Safety With Application to Automated Vehicles},
  journal   = {IEEE Trans. Control Syst. Technol.},
  volume    = {31},
  number    = {6},
  pages     = {2744--2759},
  year      = {2023}
}

@book{rawlings2017model,
  author    = {J. B. Rawlings and D. Q. Mayne and M. Diehl},
  title     = {Model Predictive Control: Theory, Computation, and Design},
  edition   = {2},
  year      = {2017},
  publisher = {Nob Hill Publishing}
}

@inproceedings{paszke2017automatic,
  author    = {A. Paszke and S. Gross and S. Chintala and G. Chanan and E. Yang and Z. DeVito and Z. Lin and A. Desmaison and L. Antiga and A. Lerer},
  title     = {Automatic Differentiation in {PyTorch}},
  booktitle = {Proc. NIPS-W},
  year      = {2017}
}

@inproceedings{didier2023approximate,
  author    = {A. Didier and R. C. Jacobs and J. Sieber and K. P. Wabersich and M. N. Zeilinger},
  title     = {Approximate Predictive Control Barrier Functions Using Neural Networks: A Computationally Cheap and Permissive Safety Filter},
  booktitle = {Proc. Eur. Control Conf. (ECC)},
  pages     = {1--7},
  year      = {2023}
}

@inproceedings{agrawal2017discrete,
  author    = {A. Agrawal and K. Sreenath},
  title     = {Discrete Control Barrier Functions for Safety-Critical Control of Discrete Systems With Application to Bipedal Robot Navigation},
  booktitle = {Proc. Robotics: Science and Systems},
  volume    = {13},
  pages     = {1--10},
  year      = {2017}
}

@book{borrelli2017predictive,
  author    = {F. Borrelli and A. Bemporad and M. Morari},
  title     = {Predictive Control for Linear and Hybrid Systems},
  year      = {2017},
  publisher = {Cambridge University Press}
}

@article{wabersich2021probabilistic,
  author    = {K. P. Wabersich and L. Hewing and A. Carron and M. N. Zeilinger},
  title     = {Probabilistic Model Predictive Safety Certification for Learning-Based Control},
  journal   = {IEEE Trans. Autom. Control},
  volume    = {67},
  number    = {1},
  pages     = {176--188},
  year      = {2022}
}

@article{wabersich22,
  author    = {K. P. Wabersich and M. N. Zeilinger},
  title     = {Predictive Control Barrier Functions: Enhanced Safety Mechanisms for Learning-Based Control},
  journal   = {IEEE Trans. Autom. Control},
  volume    = {68},
  number    = {5},
  pages     = {2638--2651},
  year      = {2023}
}

@article{wabersich2023,
  author    = {K. P. Wabersich and A. J. Taylor and J. J. Choi and K. Sreenath and C. J. Tomlin and A. D. Ames and M. N. Zeilinger},
  title     = {Data-Driven Safety Filters: {H}amilton-{J}acobi Reachability, Control Barrier Functions, and Predictive Methods for Uncertain Systems},
  journal   = {IEEE Control Syst. Mag.},
  volume    = {43},
  number    = {5},
  pages     = {137--177},
  year      = {2023}
}

@inproceedings{rosch2023predictive,
  author    = {D. R{\"o}sch and F. Berkel and M. L{\"o}hning and M. Manderla and R. Soloperto and F. Allg{\"o}wer},
  title     = {A Predictive Safety Filter for Safe Learning of Optimal Operation of Permanent Magnet Synchronous Motors},
  booktitle = {Proc. Eur. Control Conf. (ECC)},
  pages     = {1--6},
  year      = {2023}
}

\end{document}